\documentclass[lettersize,journal]{IEEEtran}
\IEEEoverridecommandlockouts
\pdfoutput=1
\usepackage{algorithm}
\usepackage{algpseudocode}
\usepackage{amsmath}
\usepackage{amsthm}
\usepackage{array}
\usepackage{graphicx}
\usepackage{subcaption}
\newcolumntype{C}[1]{>{\centering\let\newline\\\arraybackslash\hspace{0pt}}m{#1}}
\newcolumntype{R}[1]{>{\raggedleft\let\newline\\\arraybackslash\hspace{0pt}}m{#1}}
\usepackage{xspace}
\usepackage{threeparttable}
\usepackage{color}
\usepackage{colortbl}
\usepackage{url}
\usepackage{paralist}
\usepackage{wrapfig}
\usepackage{multirow}
\usepackage{listings}
\usepackage{comment}
\usepackage{booktabs}
\usepackage{makecell}
\usepackage{boxedminipage}
\usepackage{amsmath}
\usepackage{graphicx}
\usepackage[numbers,sort&compress]{natbib}
\usepackage{minibox}
\usepackage{pifont}
\usepackage{color}
\usepackage{xcolor}
\usepackage{subfiles}
\usepackage[utf8]{inputenc}
\usepackage[english]{babel}
\usepackage{xcolor}
\usepackage{ulem}
\usepackage{adjustbox}
\definecolor{blue}{RGB}{10, 10, 200}
\usepackage {fancyhdr}
\usepackage{makecell}
\usepackage[numbers,sort&compress]{natbib}
\def\BibTeX{{\rm B\kern-.05em{\sc i\kern-.025em b}\kern-.08em
    T\kern-.1667em\lower.7ex\hbox{E}\kern-.125emX}}
\newcommand{\bheading}[1]{{\vspace{4pt}\noindent{\textbf{#1}}}}


\newcounter{note}[section]
\renewcommand{\thenote}{\thesection.\arabic{note}}
\newcommand{\niu}[1]{\refstepcounter{note}{\bf\textcolor{red}{$\ll$JY~\thenote: {\sf #1}$\gg$}}}

\newcommand{\blackding}[1]{\ding{\numexpr181+#1\relax}}

\newcommand{\secref}[1]{\mbox{Sec.~\ref{#1}}\xspace}

\newcommand{\figref}[1]{\mbox{Fig.~\ref{#1}}}

\newcommand{\ignore}[1]{}


\newcommand{\ie}{\textit{i.e.}\xspace}
\newcommand{\eg}{\textit{e.g.}\xspace}

\newcommand{\etal}{\textit{et al.}\xspace}

\newcommand{\sysname}{\textsc{Mercury}\xspace}


\newcounter{packednmbr}

\newenvironment{packeditemize}{
\begin{list}{$\bullet$}{
\setlength{\labelwidth}{0pt}
\setlength{\itemsep}{2pt}
\setlength{\leftmargin}{\labelwidth}
\addtolength{\leftmargin}{\labelsep}
\setlength{\parindent}{0pt}
\setlength{\listparindent}{\parindent}
\setlength{\parsep}{1pt}
\setlength{\topsep}{1pt}}}{\end{list}}

\newtheorem{theorem}{Theorem}

\newtheorem{lemma}{Lemma}

\cfoot{}
\pagestyle{fancy}
\rfoot{\thepage}

\begin{document}

\title{\sysname: Practical Cross-Chain Exchange via Trusted Hardware}
\author{
    Xiaoqing Wen, 
    Quanbi Feng, 
    Jianyu Niu,~\IEEEmembership{Member, ~IEEE,}
    Yinqian Zhang,~\IEEEmembership{Member, ~IEEE,} \\
    Chen Feng,~\IEEEmembership{Member, ~IEEE}
    \IEEEcompsocitemizethanks{
        \IEEEcompsocthanksitem Xiaoqing Wen and Chen Feng are with Blockchain@UBC and the School of Engineering, The University of British Columbia (Okanagan Campus), Kelowna, BC, Canada. Email: xqwen@student.ubc.ca and chen.feng@ubc.ca. 
        
        Quanbi Feng, Jianyu Niu, and Yinqian Zhang are with the Engineering and Research Institute of Trustworthy Autonomous Systems and the Department of Computer Science and Engineering, Southern University of Science and Technology, Shenzhen, China.
        Email: 12011501@mail.sustech.edu.cn, niujy@sustech.edu.cn and yinqianz@acm.org.
    }
}
\sloppy

\maketitle
\IEEEtitleabstractindextext{
\begin{abstract}
The proliferation of blockchain-backed cryptocurrencies has sparked the need for cross-chain exchanges of diverse digital assets. 
Unfortunately, current exchanges suffer from high on-chain verification costs, weak threat models of central trusted parties, or {synchronous}
requirements, making them impractical for currency trading applications. 
In this paper, we present \sysname, a practical cryptocurrency exchange that is trust-minimized and efficient without online-client requirements.
{\sysname leverages Trusted Execution Environments (TEEs) to shield participants from malicious behaviors, eliminating the reliance on trusted participants and making on-chain verification efficient.}
Despite the simple idea, building a practical TEE-assisted cross-chain exchange is challenging due to the security and unavailability issues of TEEs. 
\sysname tackles the unavailability problem of TEEs by implementing an efficient challenge-response mechanism executed on smart contracts. 
Furthermore, \sysname utilizes a lightweight transaction verification mechanism and adopts multiple optimizations to reduce on-chain costs. 
Comparative evaluations with XClaim, ZK-bridge, and Tesseract demonstrate that \sysname significantly reduces on-chain costs by approximately 67.87\%, 45.01\%, and 47.70\%, respectively. 
\end{abstract}

\begin{IEEEkeywords}
Blockchain, Cross-chain, Cryptocurrency exchange, Trusted Execution Environment
\end{IEEEkeywords}
}

\IEEEdisplaynontitleabstractindextext
\IEEEpeerreviewmaketitle

\vspace{1cm}
\IEEEraisesectionheading{\section{Introduction}\label{sec:intro}}

\IEEEPARstart{I}{n} 2008, Satoshi Nakamoto invented Bitcoin~\cite{nakamoto2008bitcoin}, the first widely deployed, decentralized cryptocurrency. Subsequently, a myriad of cryptocurrencies backed by blockchain technology (including Ethereum~\cite{ethereum}, Ripple~\cite{schwartz2014ripple}, Algorand~\cite{gilad2017algorand}, etc.) have emerged for diverse applications, reshaping the digital finance landscape~\cite{fang2022cryptocurrency, liu2022common}. Remarkably, as of May 2024, nearly ten thousand cryptocurrencies exist in the market, collectively valued at approximately 2360 billion USD \cite{coinmarket}. 
This unprecedented diversity has significantly fuelled the interoperability demand between different blockchains, especially secure and efficient cross-chain cryptocurrency trading~\cite{belchior2021survey}. 

Cross-chain cryptocurrency exchange~\cite{augusto2023sok} that supports cryptocurrency trading among different blockchains typically operates as follows.
Suppose Alice would like to trade 1 Ethereum coin (ETH)~\cite{ethereum} for EOS~\cite{eos} at an exchange rate of 1 ETH = 2500 EOS, as depicted in \figref{fig:overview}. 
This trade involves two on-chain transactions: an ETH deposit from Alice to the exchange and an EOS transfer from the exchange back to Alice. 
These two transactions are inherently coupled in that they either both take place or not at all---an all-or-nothing requirement known as atomicity~\cite{herlihy2018atomic, hu2023ivyredaction}. 
This atomicity constitutes a central challenge for the design of cross-chain currency exchange because this property is proved to be impossible without a trusted third party in the asynchronous setting (\ie, no online requirements for the involved parties)~\cite{zamyatin2021sok}.

\begin{figure}[t]
    \centering
    \includegraphics[width=9cm]{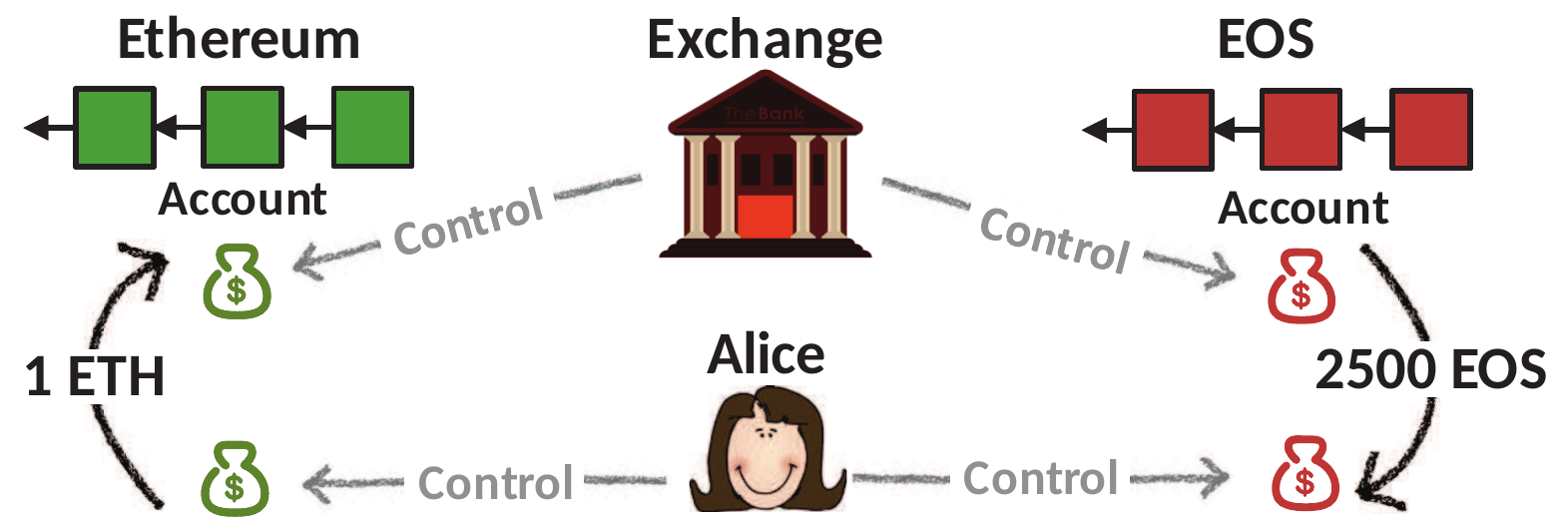}
    \caption{An example of the cross-chain exchange for Alice to exchange 1 ETH for 2500 EOS. }
    \label{fig:overview}
\end{figure}



However, current cross-chain solutions suffer from high on-chain verification costs, weak threat models of central trusted parties, or {synchronous}
requirements, making them impractical for currency trading applications. 
Specifically, 
notary-based approaches~\cite{yin2022bool, xiong2022notary} use a single or a small group of notaries (\ie, centralized parties) to handle cross-chain transactions. However, they place high trust in notaries, which makes them suffer from centralization risks~\cite{chen2023pacdam}.
Hashed Time-Lock Contract (HTLC)-based approaches~\cite{herlihy2018atomic, tesseract} leverage hash and time lock to ensure the atomicity of clients who keep continuous synchronization with blockchains (\ie, synchronization requirement).
In addition, the time lock will be designed to be relatively long (\eg, several hours~\cite{lightingnetwork}), which  
gives the initiator the priority to start or cancel the transaction, resulting in griefing attacks~\cite{thyagarajan2022universal, xu2021game}.
Relay-based approaches~\cite{zamyatin2019xclaim, xie2022zkbridge} rely on smart contracts to do costly on-chain verification of valid or fraud proofs.  
For instance, the gas consumption of verifying the zero-knowledge proof of Cosmos~\cite{cosmos} status on Ethereum is 227K gas~\cite{xie2022zkbridge}, which is about 10 times that of a simple ETH transfer on Ethereum. 
Therefore, we ponder the following question: \textit{how to design a practical exchange that efficiently proceeds cross-chain transactions without centraized parties and synchronization requirements?}
 
In this paper, we present \sysname\footnote{\sysname is a major god religion of \textit{financial gain}, \textit{commerce}, communication in Roman religion and mythology~\cite{Mercury}.}, a practical cryptocurrency exchange that is trust-minimized (\ie, making no trust assumption on exchange operators),
 and efficient (\ie, low gas fees and short confirmation latency), without online-client requirement (\ie, no network-synchronization requirements for clients).
To this end, \sysname leverages a group of operators running consensus inside Trusted Execution Environments (TEEs) 
to process the various operations of the system, eliminating the reliance on trusted operators and removing synchronization requirements for clients. 
Specifically, we use Raft protocol inspired by the previous work~\cite{engraft}, which proved that running Raft~\cite{ongaro2014search} inside TEEs can achieve Byzantine tolerance. 
Note that operators spend a client's deposits on one blockchain (\ie, the source blockchain) only after transferring corresponding currencies to the client on another blockchain (\ie, the target blockchain).  The integrity of TEE ensures that a client's deposit will not be spent until the transfer is made, thereby achieving atomicity. 

Unfortunately, directly porting cross-chain exchange processes inside the TEE cannot lead to a practical exchange due to the TEEs' unavailability and resource limitations.
First, since the number of operators in the exchange is usually small (\eg, tens of entities), there may not be enough honest operators to ensure the availability of the system. 
This is because TEEs' I/O is manipulated by their hosts (\ie, hypervisors or software systems) such that malicious operators can easily block messages from/to TEEs. 
Without availability, the system cannot ensure the atomicity of cross-chain exchange.
For instance, in the above example, after Alice deposits 1 ETH to the exchange, the host can refuse to execute the EOS transfer to Alice, \ie, being unavailable. 
As a result, Alice neither gets 2500 EOS nor withdraws 1 ETH on the source blockchain. 
Second, TEEs also face resource limitations, particularly in storage space, presenting a challenge in efficiently verifying the finalization of transactions. 
Besides,  computation and storage on the blockchain are expensive. 
For instance, storing a 256-bit integer on the Ethereum smart contract costs 8K gas (0.46 USD)~\cite{solidity}. 
Therefore, minimizing the calculation and storage operations within smart contracts is necessary.

To tackle the first issue, we propose a \textit{cost-efficient challenge-response} mechanism implemented by on-chain smart contracts, which provides the on-chain method for clients to withdraw the deposit. 
In particular, if the transfer of 2500 EOS fails, Alice can withdraw the 1ETH by initiating a challenge.
This mechanism enforces transaction atomicity even when the exchange experiences periods of unavailability. 
Second, \sysname proposes a \textit{lightweight verification} mechanism by outsourcing some workload to on-chain smart contracts.
Specifically, when registering on the smart contract, operators with TEE synchronize with the blockchain and submit the proof of the latest block to smart contact.
Thus, operators only need to synchronize the latest block rather than all historical blocks. 

To minimize the on-chain storage cost, we adopt the \textit{periodical checkpoint} for transactions.
Specifically, operators submit checkpoints regularly, updating transactions that can be challenged.
We also upload transactions and state updates to the blockchain in batch, requiring only a single multi-signature verification on-chain for multiple transactions or state updates. 
Moreover, we introduce an \textit{pledge incentive mechanism} to prevent malicious clients from orchestrating Denial-of-Service (DoS) attacks by inundating the system with extensive challenge requests.
Note that although \sysname is built atop blockchains supporting smart contracts, it can be further extended to blockchains without smart contracts by using Hash time-lock contract (HTLC). 

To validate our approach, we build a prototype of \sysname using Golang~\cite{golang}. 
We develop the on-chain smart contracts using Solidity 0.8.0~\cite{solidity} and deploy them on the Ethereum test network, Sepolia~\cite{sepolia}. 
Ethereum’s Virtual Machine (VM) provides Turing completeness~\cite{ethereum}, fulfilling the requirements for managing the accounts managed by TEE-equipped operators (\secref{sec:design}). 
We deploy \sysname on AWS and conduct extensive experiments to evaluate its performance. We also compare \sysname with three state-of-the-art counterparts, XClaim~\cite{zamyatin2019xclaim}, ZK-Bridge~\cite{xie2022zkbridge} and Tesseract~\cite{tesseract}.

\bheading{Contributions.} The contributions of this work are as follows.

\begin{packeditemize}
\item We propose \sysname, a high-performance exchange built on TEEs to enable seamless cryptocurrency trading across diverse blockchains. 
\sysname is trust-minimized and operates without online requirements for clients with the consideration of security and availability issues of TEEs. 


\item We introduce an efficient challenge-response mechanism, ensuring clients can withdraw their currencies when the system experiences unavailability. We further optimize the on-chain cost by
batching transactions and using checkpoints.

\item We discuss the extension of \sysname to blockchains without smart contracts by replacing the challenge mechanism with Hash time-lock contracts (HTLCs).

\item We evaluate \sysname and compare its performance against three counterparts. 
Our results show that \sysname significantly outperforms ZK-Bridge in terms of the transaction proof generation time (3 seconds vs. 18 seconds) and boasts a significantly lower on-chain verification and execution cost (125K gas compared to XClaim's 389K gas and Tesseract's 239K gas).

\end{packeditemize}

\section{Related Work} \label{sec:related} 
We review existing cross-chain methods for realizing cross-chain exchanges that support trading cryptocurrencies against one another. 
These methods can be divided into three categories: notary-based~\cite{yin2022bool, xiong2022notary}, HTLC-based~\cite{herlihy2018atomic, tesseract}, and relay-based approaches~\cite{zamyatin2019xclaim, xie2022zkbridge}. 
As mentioned in \secref{sec:intro}, these methods fall short in weak threat models, synchronization requirements, and high costs.
Besides, the reliance on the central party also introduces a single-point-of-failure (SPF), and some methods are tailored for specific blockchains and incompatible with other platforms. 
A comparison between \sysname and prior works is provided in Table~\ref{table:comparison}.

\begin{table}[t]
\centering
    \begin{threeparttable}
    \caption{{\textbf{Comparison of existing cryptocurrency exchange approaches.}} 
    The Comp, 
    OCR, FR, and NSPF are short for compatibility, 
    online-client requirement, fairness, and no single-point failure, respectively.
    {The compatibility denotes the ability to support heterogeneous blockchains. 
    Online-client requirement signifies that there is no synchronization requirements for clients, while fairness means that no party can benefit from the fluctuation of the exchange rate.}
    }
    \label{table:comparison}
    \begin{tabular}{@{}lcccccc@{}}
    \toprule
    Protocols &  \thead{Trust-minimized} &  CMP & \thead{OCR} & FR &  NSPF  \\
     \midrule
    HTLC \cite{herlihy2018atomic}     & \ding{51} & \ding{51} & \ding{51} & \ding{55} & \ding{51}   \\
    Tessercat \cite{tesseract}        & \ding{51} & \ding{51} & \ding{55} & \ding{55} & \ding{55} \\
    XCliam \cite{zamyatin2019xclaim}  & \ding{51} & \ding{55} & \ding{55} & \ding{51} & \ding{51} \\
    ZK-Bridge \cite{xie2022zkbridge}          & \ding{51} & \ding{55} & \ding{55} & \ding{51} & \ding{51} \\
    Bool Network \cite{yin2022bool}   & \ding{55} & \ding{51} & \ding{51} & \ding{55} & \ding{51} \\
\midrule
\textbf{\sysname}                     & \ding{51} & \ding{51} & \ding{55} & \ding{51} & \ding{51} \\
\bottomrule
\end{tabular}
\end{threeparttable}
\end{table} 

\bheading{Notary-based method.}
Notary-based method
introduces a single or a small group of notaries to proceed with cryptocurrency transfers and exchanges between different blockchains.
When a cross-chain transaction is initiated, it is firstly verified by notaries and then recorded on the involved blockchains.
Centralized notary-based exchanges~\cite{binance, coinbase} offer advantages in terms of trading fees and efficiency.
But, they suffer from significant security risks, including hacking and fraud attacks~\cite{cenex,cenex1}.
Other solutions adopt multiple nodes as a committee for processing cross-chain transactions.
Yin \etal~\cite{yin2022bool} propose Bool Network, a distributed platform that utilizes Multi-Party Secure Computing to realize distributed key management for hidden committees.
Xiong \etal~\cite{xiong2022notary} implement BFT consensus among committee members.
However, they place high trust in notaries, which makes them suffer from centralization risks~\cite{chen2023pacdam}.
Since the number of nodes in these systems is usually small (\ie, tens of nodes~\cite{ren2023interoperability}), an adversary can easily corrupt all of them.

\bheading{HTLC-based method.}
HTLC-based method
utilizes time locks and hash locks to achieve cross-chain atomic swaps.
HTLC~\cite{herlihy2018atomic} requires synchronous clients (\ie, it is interactive, necessitating all parties to be online and actively monitor all relevant blockchains during execution).
{Tesseract~\cite{tesseract} leverages TEE to generate transaction pairs of HTLC, alleviating the requirements for clients to continuously synchronize with the blockchain.}
However, in Tesseract,  a single server carries out transaction processing, which is vulnerable to a single point of failure. 
{Besides, they suffer from griefing attacks~\cite{thyagarajan2022universal, xu2021game} caused by the ﬂuctuation of the exchange rate. 
Specifically, the trader initiating the transaction can watch the exchange rate. 
If the exchange rate is favorable, the trader will finish the transaction; otherwise, the trader will opt not to disclose the secret, resulting in the abortion of the transaction.} 

\bheading{Relay-based method.} Relay-based method
adopts the relay model~\cite{deng2018research, yin2021sidechains} to deal with cross-chain exchanges, which use on-chain verification to synchronize two chains. 
XClaim~\cite{zamyatin2019xclaim} and BTCrelay~\cite{btcrelay} leverages a light client of Bitcoin on Ethereum to verify the transactions on Bitcoin with Simplified Payment Verification (SPV)~\cite{spv}.
ZK-Bridge~\cite{xie2022zkbridge} and Topos~\cite{gauthier2022topos} utilize zero-knowledge proof technology to enable the on-chain proof for the headers of Ethereum.
ZeroCross~\cite{li2022zerocross} further provides privacy protection. {CrossChannel~\cite{luo2024crosschannel} establishes a cross-chain channel that employs the chain relay to synchronize channel-related information.}
However, on-chain verification for zero-knowledge proof or SPV greatly increases the cost and time consumption for cross-chain transactions.
Due to the requirements for the verification scheme for a specific blockchain, these methods are usually tailored for a specific blockchain, making them incompatible with another blockchain. 
Cosmos~\cite{cosmos}, Polkadot~\cite{polkadot} and Agentchain~\cite{hei2022practical} introduce relay chain, a central hub to transfer and communicate cryptocurrencies between different blockchains.

\section{Problem Statement, Model and Challenges} \label{sec:model}
In this section, we first present the problem statement and the system model. Then, we introduce the associated challenges and necessary preliminaries. 
Table~\ref{table:notation} lists the frequently used notations in this paper. 

\subsection{Problem Statement}
\sysname enables a client with cryptocurrency $s$ on a source blockchain, denoted as $\mathcal{S}$, to trade for cryptocurrency $t$ on a target blockchain, denoted as $\mathcal{T}$. 
Specifically, a group of operators, forming an exchange $\mathcal{E}$, receive cryptocurrency $s$ from the client on the blockchain $\mathcal{S}$ and transfer cryptocurrency $t$ to the client on the blockchain $\mathcal{T}$. 
The account of the exchange $\mathcal{E}$ on the blockchain $\mathcal{S}$ (resp., $\mathcal{T}$) is denoted by \textsf{vault$_S$} (resp., \textsf{vault$_T$}). 
To achieve cross-chain exchange, we need two transactions: 1) the client deposits currency $s$ from its account to \textsf{vault$_S$} on the blockchain $\mathcal{S}$; and 2) the exchange $\mathcal{E}$ transfers currency $t$ from its account \textsf{vault$_T$} to the client's account on the blockchain $\mathcal{T}$.
We use $tx_S$ and $tx_T$ to denote the above two transactions, respectively. 
Here, clients and operators use their public/private key pairs as their identifiers on both the blockchain $\mathcal{S}$ and $\mathcal{T}$, and their transactions can be verified by their signatures.  
We assume the client can synchronize with the blockchain $\mathcal{S}$ and $\mathcal{T}$ to initiate on-chain transactions. 

\begin{table}[t]
    \footnotesize
    \centering
    \setlength{\abovecaptionskip}{0cm}
    \caption{\textbf{Summary of notations.}} \label{table:notation}
    \begin{tabular}{@{}m{0.9cm}l|m{0.9cm}l@{}}
        \toprule[1pt]
        Term    & Description  &  Term    & Description   \\
        \midrule
        $\mathcal{S}$         & Source blockchain           & $p_i$ & Operator in \sysname\\
        $\mathcal{T}$         & Target blockchain           & $n$ & Number of operators       \\
        \textsf{vault$_S$}   & Vault on source chain  & $\eta_i$    & Enclave for operator $p_i$       \\ 
        \textsf{vault$_T$}   & Vault on target chain  & $\tau_c$       & Timer for a challenge\\
        $tx_S$      & Trans on source chain  & $\tau_w$ & Timer for response      \\
        $tx_T$      & Trans on target chain  & $(pk_i, sk_i)$ & Key pair for enclave $i$\\
        \bottomrule[1pt]
    \end{tabular}
    \vspace{-0.4cm}
\end{table}



\subsection{System Model}\label{subsec:model}
We consider an exchange of $n = 2f+1$ operators, denoted by the set $\mathcal{P} = \{p_1, p_2, ..., p_n\}$. 
Each operator $p_i$ is equipped with a TEE machine. 
We assume all operators can arbitrarily deviate from the protocol, while the software inside TEEs is off-limit (introduced shortly). 
The underlying reason for making this harsh assumption is that in a real-world exchange, the number of operators is usually small (\ie, tens of operators~\cite{ren2023interoperability}), and all of them may be simultaneously corrupted. 
{What is more, we consider the worst case, in which all Byzantine operators can be controlled by a single adversary $\mathcal{A}$.}
There is a public/private key pair of the operator $p_i$, denoted by $(pk_i, sk_i)$, in which the private key is only stored and used inside  TEE.



\bheading{Threat model of TEE.}
We assume the TEE to protect the program inside TEE in line with other TEE-assisted designs~\cite {zhang2016town, cheng2019ekiden, engraft, tesseract}.
We assume that TEE provides integrity and confidentiality guarantees, and secure remote attestation, which can provide unforgeable cryptographic proof that a specific program is running inside TEE~\cite{sgxremoteattest, sevremoteattest}.
Also, an operator cannot know the private key $sk_i$ inside TEE.
We assume operators have full control over those machines, including root access and control over the network.
The operators can arbitrarily launch, suspend, resume, terminate, and crash TEEs at any time. 
Besides, the TEE operators can delay, replay, drop, and inspect the messages sent to and from TEEs, \ie, manipulating input/output messages of TEEs. 
In other words, the TEE cannot guarantee availability due to these manipulations, which is the main issue addressed in this work. 
The rollback attacks are orthogonal to this work and can be addressed by work~\cite{matetic2017rote, brandenburger2017rollback, narrator, narrator-pro}.



\bheading{Blockchain model.} 
We establish a cross-chain exchange between two blockchains with different consensus mechanisms and models.
As the execution of transactions relies on both chains, if either chain is compromised, the system cannot operate normally.
Thus, we follow the assumption in~\cite{zamyatin2019xclaim} that the blockchain $\mathcal{S}$ and blockchain $\mathcal{T}$ are both safe and live.
Here, the safety property means that all the honest nodes have the same view of the blockchain, while the liveness property refers to the fact that some event must eventually happen in a distributed system~\cite{kindler1994safety}.  

We assume that both blockchains provide finality, \ie, once transactions or blocks are finalized, they will remain finalized forever~\cite{zhang2020analysis}.
We assume that both blockchains support smart contracts~\cite{zou2019smart}.
Smart contracts can access the current time using the method $block.timestamp$ and the hash of the recent 265 blocks~\cite{solidity} using the method $block.bh(i)$, where $i$ is the number of the block.
Smart contracts can also
{support} cryptographic schemes, including hash computing and ECDSA encryption algorithm. 



\subsection{System Goals}
we derive the following desirable properties for \sysname under the above threat and blockchain models.
\begin{packeditemize}
\item \textbf{Atomicity.} 
An atomic cross-chain settlement can guarantee that both $tx_S$ and $tx_T$ will be confirmed on their chains, or neither $tx_S$ nor $tx_T$ will be confirmed on their chains.

\item\textbf{Trust-Minimized Participants.}
No trusted third parties are involved to ensure the atomicity of cross-chain exchange.

\item\textbf{No on-lineclient requirement.}
If the clients do not synchronize with the blockchain after initializing the cross-chain requests, they will not suffer any loss (\ie, the atomicity can be guaranteed).
\end{packeditemize}









\subsection{Challenges}
There are several challenges in designing \sysname, which are discussed below.

\noindent \textbf{Atomicity protection against malicious operators.} 
The unavailability of TEEs poses a significant threat to the overall atomicity of cross-chain exchanges. 
For example, when all TEEs are unavailable, a client, after depositing its assets on the blockchain $\mathcal{S}$, will never receive the corresponding tokens from the blockchain $\mathcal{T}$.
To address this critical issue, we propose a practical challenge-response mechanism implemented through on-chain smart contracts. 
If the client's deposit is not processed for a long time, the client can raise a challenge on the smart contract to force the operator to execute the transaction and reply to the challenge. 
Otherwise, the deposit will be returned to the client.

\noindent \textbf{Transaction verification under limited resources in TEEs.}
TEEs face resource limitations, particularly in storage space, presenting a challenge in efficiently verifying the finalization of a given transaction.
Inspired by the light-client design, which stores only block headers to verify transactions, \sysname proposes a similar solution that relies only on block headers.
Unlike traditional light-client design, \sysname doesn't require all the historical block headers and it outsources some workload to on-chain smart contracts.
Specifically, in the initiation phase, TEEs engage in a process of getting the latest finalized block headers and committees.
Notably, successful registration on the chain is exclusively granted to TEEs that proficiently synchronize historical block headers.
Therefore, adopting simple on-chain verification can reduce the synchronization cost for historical blocks in TEE.


\vspace{2mm} \noindent \textbf{Minimizing expensive on-chain costs.}
In \sysname, an on-chain vault is instrumental for interactions with operators and users, which introduces computational and storage costs that directly impact the expenses of cross-chain exchanges. Recognizing the necessity to minimize these costs, our design focuses on the efficient use of on-chain resources.
To reduce the computational cost of on-chain verification for TEEs, the TEE uploads a transaction bundle to the chain, allowing for the verification of multiple transactions with a single TEE signature.
To minimize on-chain storage costs, we establish checkpoints for challengeable transactions, enabling the removal of transactions preceding the checkpoint.

\subsection{Building Blocks}
\vspace{2mm} \noindent \textbf{Trusted Execution Environment.}
An operator can instruct his TEE to create new enclaves, \ie, new execution environments running a specified program.
To develop TEE-agnostic protocols, we adopt the methodology outlined in~\cite{pass2017formal}, 
which models the functionality of attestable TEEs. 
Each TEE is initialized with a key pair $(pk, sk) \leftarrow \Sigma.keyGen(1^{\lambda})$ generated by its manufacturer, where $\Sigma$ is the signature algorithm. 
Here, $sk$ represents the TEE's internal master secret key, while $pk$ is the corresponding public key.
The ideal functionality of a TEE offers the following Application Interfaces (APIs) for a program code $prog$: 
\begin{packeditemize}

    \item $eid \leftarrow install(prog)$: Install the program $prog$ as a new enclave within the TEE, assigning it a unique ID, $eid$.

    \item $(outp, \sigma) \leftarrow resume(eid, inp)$: Resume the enclave with ID $eid$ to execute program $prog$ using the input $inp$. $\sigma$ is the TEE's endorsement, confirming that $outp$ is the valid output.
\end{packeditemize}

Besides, we include an attestation API for TEE to generate an attestation quote $\rho \leftarrow attest(eid, prog)$ proving that the program $prog$ has been installed in the enclave $eid$.
And $\rho$ can be verified through method $verifyquote(\rho)$~\cite{onchainattest}.

\vspace{2mm} \noindent \textbf{Cryptographic Primitives.}
Our protocol utilizes a multi-signature scheme ($\textsc{GenSig}, \textsc{Sign}, \textsc{Verify}$), and a secure hash function $\textsc{Hash}(\cdot)$
All message sent within our protocol are signed by the sending party. 
A message signed by party $p$ is denoted as  $\textsc{Sign}(mr;sk_p) \rightarrow \sigma$.
Multi-signature scheme allows multiple parties to collaboratively generate a digital signature for a message by two main functions: \textsc{Sign} and \textsc{Verify}. 
Specifically, given a set of users $G$ and the message $m$, the function \textsc{Sign}($mr, \{sk\}_{g \in G}$) $\rightarrow \sigma$ takes the private key $sk$ and the hash of the message $mr$ as input and return the signatures $\sigma$. 
The function \textsc{Verify}($\{pk\}_{g \in G}, mr, \sigma$) $\rightarrow 0/1$ takes the public key $pk$, $mr$ and $\sigma$ and returns the result of whether the signature was generated by $G$.

\vspace{2mm} \noindent \textbf{Automated Market Maker (AMM).}
AMM~\cite{mohan2022automated} algorithm automatically provides liquidity to a decentralized exchange by continuously adjusting the exchange rate of the traded currencies based on some predefined mathematical formulas.
Specifically, Constant Function Market Maker (CFMM)~\cite{adams2021uniswap}
can keep the product of the quantities of currencies as constant.
For instance, in a system which holds $\mathrm{X}$ (resp., $\mathrm{Y}$) amount of currency $\mathrm{A}$ (resp., $\mathrm{B}$), if a client wishes to exchange $x$ amount of currency $\mathrm{A}$ for currency $\mathrm{B}$, the quantity of currency $\mathrm{B}$ available for exchange satisfies $(\mathrm{X}+x)*(\mathrm{Y}-y) = \mathrm{X}*\mathrm{Y} = k$, where $k$ is a constant.
The function $y \leftarrow \textsc{AMM}.exchange(x)$ calculates the exchange rate from $\mathrm{A}$ to $\mathrm{B}$.
The currencies in the liquidity pool are supplied by Liquidity Providers (LPs)~\cite{aigner2021uniswap}.
LPs contribute to the liquidity pool by depositing both currencies $\mathrm{A}$ and $\mathrm{B}$. 


\section{\sysname Design} \label{sec:design}

\subsection{Overview}\label{overview}
\begin{figure}[t]
    \centering
    \includegraphics[width=9.2cm]{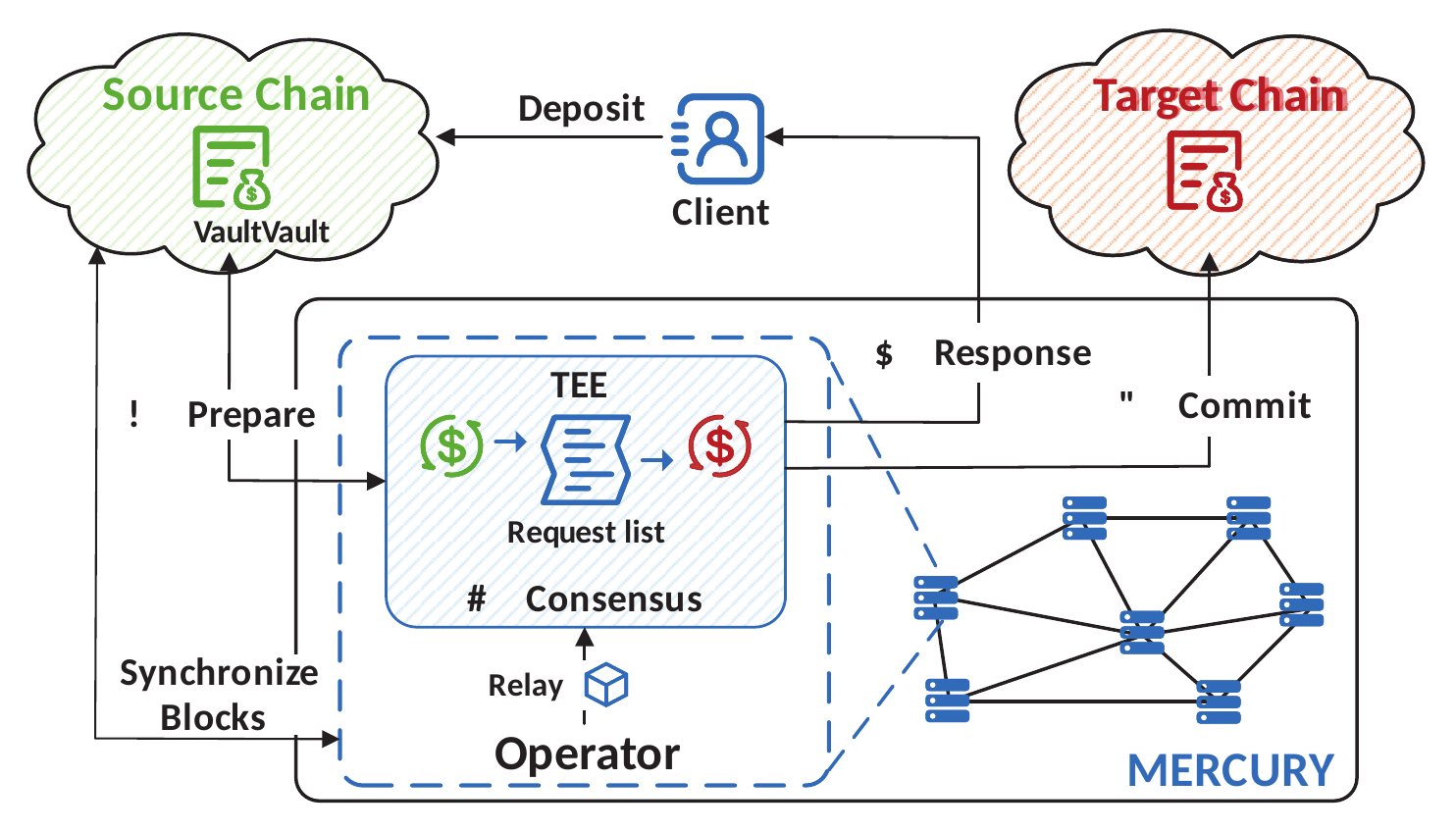}
    \caption{An architectural overview of \sysname. A client transfers currencies from the chain $\mathcal{S}$ to the chain $\mathcal{T}$. 
    }
    \label{fig:TExchange}
\end{figure}

\figref{fig:TExchange} illustrates an architectural overview of \sysname, which enables secure currency exchanges between any two chains: the blockchain $\mathcal{S}$ and the blockchain $\mathcal{T}$. 
The vaults act as the on-chain interactions for clients, while operators are the entities that handle cross-chain exchanges.

\sysname includes five key components: system initialization (\secref{sec:init}), normal-case operation of the cross-chain exchange (\secref{sec:normalcase}), challenge-response mechanism (\secref{sec:failures}), {transaction verification inside TEE} (\secref{sec:synchronization}), and rewards (\secref{sec:reward}).
Specifically, system initialization involves the registration of operators on the respective blockchains; 
normal-case operations are responsible for transaction processing;
challenge-response mechanism ensures clients redeem their deposits on the chain;
transaction verification allows enclaves to synchronize with blockchain $\mathcal{S}$ and $\mathcal{T}$;
and incentive mechanisms to incentivize liquidity provision for the vaults and to sustain services, equitably distributing transaction fees among LPs and operators. 

\begin{algorithm}[t]
\caption{The Partial Functions Executed by \textsf{vault}}\label{alg:vault}
\begin {algorithmic}[1]

\State{\textbf{Function}~\textbf{\textsc{Init}($committee$)}}
\Comment{Initialize Requests}
    \State{~~~~$Dep \leftarrow \emptyset$}\Comment{Set deposits list to empty}
    
\State               

    

\State{\textbf{Function~\textsc{Deposit}($s_{addr}, v$)} } 
    \State{~~~~\textbf{Require} $value > 0$}
    \State{~~~~$id \leftarrow $ \textbf{\textsc{Hash}($s_{addr}, v, block.timestamp$)}}
    \State{~~~~$Dep_{id}.sender \leftarrow s_{addr}$}
    \State{~~~~$Dep_{id}.value \leftarrow v$}
    \State{~~~~$Dep_{id}.underChal \leftarrow false$}
    \State{~~~~Trigger Deposit \textbf{event}}

 \State
 \State{\textbf{Function~\textsc{Transfer}($tranSet, signs$)}} 
    \State{~~~~\textbf{Require} \textbf{\textsc{verifyMutisig}($signs$)}}
    \State{~~~~\textbf{For} each $ tx \in tranSet$}
    \State{~~~~~~~~$tx.recever.transfer(v)$}
    \State{~~~~\textbf{End For}}

\State
\State{\textbf{Function~\textsc{StartChallenge}($id, value, pledge$)}} 
    \State{~~~~\textbf{Require} $pledge \geq Pledge_C$}    
    \Statex{~~~~~~~~~~~~~~~$\wedge value = Dep_{id}.value$}
    \Statex{~~~~~~~~~~~~~~~$\wedge Dep_{id}.underChal = false$}
    \State{~~~~$Dep_{id}.underChal \leftarrow true$}
    \State{~~~~$Dep_{id}.startChal \leftarrow block.timestamp$}
    \State{~~~~Trigger Challenge \textbf{event}}

\State
\State{\textbf{Function~\textsc{ResloveChallenge}($id$)}} 
    \State{~~~~\textbf{Require} $Dep_{id}.underChal = true$}
    \Statex{~~~~~~~~~~~~~~~$\wedge block.time - Dep_{id}.startChal > \tau_w$}
    \State{~~~~\textbf{\textsc{Refund}($id$)}}
    \State{~~~~\textbf{\textsc{Delete}($Dep_{id}$)} }

    
\State
\State{\textbf{Function~\textsc{UpdateCheckpoint}($idSet,signs$)}} 
    \State{~~~~\textbf{Require} \textbf{\textsc{verifyMutisig}($signs$)}}
    \State{~~~~\textbf{For} each $id \in idSet$}
    \State{~~~~~~~~\textbf{\textsc{Delete}($Dep_{id}$)} }
    \State{~~~~\textbf{End For}}
    
\end{algorithmic}
\end{algorithm}

\subsection{System Initialization}\label{sec:init}
During the initialization,
$vaults$ are deployed on different blockchains by the service providers.
Algorithm~\ref{alg:vault} gives the main functions executed in \textsf{vault}.
Then, the operators join the system by registering on $vaults$.
Thus, $vaults$ are jointly controlled by the operators, whose public keys are recorded on the $vaults$.
The registration process involves the attestation and block synchronization verification for the enclave, and recording the enclave's public key on the $vaults$.

There are $n=2f+1$ operators that register in the $vaults$ during the initialization.
Here, we present an exemplary representation of an operator $p_i$ registering on \textsf{vault$_S$} for clarity.
The operator $p_i$ equipped with TEE instructs his TEE to install the program $prog$ for \sysname and creates a new enclave $\eta_i$.
The registration process has three steps:

\vspace{1mm} \noindent \blackding{1} 
The enclave $\eta_i$ in operator $p_i$ generates a unique pair of private and public keys $\left<sk_i, pk_i\right>$ for the \textsf{vault$_S$} on the blockchain $\mathcal{S}$. 
The secret key $sk_i$ is securely stored within $\eta_i$, accessible only by the program running in $\eta_i$, while the public key $pk_i$ is returned as output to the operator $p_i$. 
Subsequently, $\eta_i$ generates an attestation $\rho_i$ claiming that the program $prog$ runs in  $\eta_i$ and controls the secret key $sk_i$. 

\vspace{1mm} \noindent \blackding{2} 
The operator $p_i$ transmits the header of the latest $k$ finalized block to the enclave $\eta_i$, where $k$ is set as the upper bound the enclave may lag. 
$\eta_i$ verifies the block header's consistency (refer to ~\secref{sec:synchronization}), generating a proof $\rho_i^{bc}$ (\ie, the header of the latest block signed by the $\eta_i$). 

\vspace{1mm} \noindent \blackding{3} 
Then, the operator registers $\eta_i$ by invoking \textsc{Register}$(\eta_i, \rho_i, \rho_i^{bc}, pk_i)$ on the \textsf{vault$_S$}. 
The \textsf{vault$_S$} verifies that $\rho_i$ is a valid attestation~\cite{pass2017formal, onchainattestation} and $\rho_i^{bc}$ refers to one of the latest $k$ blocks on the blockchain $\mathcal{S}$.
Upon successful verification, \textsf{vault$_S$} adds the operator $p_i$ with $pk_i$ to the operators list. 

The registration process guarantees the execution of the $prog$ for all registered enclaves and the confidentiality of the secret key $sk_i$. 
Consequently, there is no need to re-attest enclaves in subsequent protocol steps.
Clients can access the \textsf{vault$_S$} by its smart contract address on the blockchain $\mathcal{S}$.

\subsection{Cross-Chain Exchange}\label{sec:normalcase}


To safeguard the liveness of \sysname, no more than $f$ TEEs operated by a group of $n$ operators may crash.
To ensure that operators can generate enough votes and that merely $f$ crash operators cannot collect valid numbers of votes, we design the system such that $n = 2f + 1$, and a minimum of $f + 1$ votes is required for passing the verification on smart contract.
As mentioned in~\secref{sec:intro}, we implement RAFT protocol to the group of TEE-equipped operators.  
In the normal-case operation of a cross-chain exchange, where a client seeks to trade its currencies between the blockchain $\mathcal{S}$ and the blockchain $\mathcal{T}$, the process encompasses five phases 
as shown in~\figref{fig:TExchange} and explained below.

\vspace{1mm} \noindent \blackding{1}
The client initiates an exchange request by depositing its currency $s$ to \textsf{vault$_S$} on blockchain $\mathcal{S}$. 
Specifically, it creates a $deposit$ transaction as a tuple of $\left<s_{addr}, v\right>_\sigma$(cf. Algorithm~\ref{alg:vault} line 4), where 
$s_{addr}$ is the client's address on the blockchain $\mathcal{S}$, and \textsf{vault} is the transferred value of currency $s$ (greater than zero).
\textsf{vault$_S$} calculates a unique $id$, which is the hash value of the $deposit$ transaction, records  $\left<id, s_{addr}, v\right>$, and locks the \textsf{vault} amount of currency $s$.
Note that each $deposit$ corresponds to a unique on-chain identifier $id$, which prevents replay attacks (\ie, clients cannot initiate more than one $request$ using the same $deposit$).

Then, the client sends a $request$ as a tuple $\left<\textsc{request}, id, s_{addr}, id_T, r_{addr} \right>_\sigma$ to all the operators, where $id_T$ is the identifier of the blockchain $\mathcal{T}$, and $r_{addr}$ is the receiving address of the client on $\mathcal{T}$. 
Note that the message $request$ contains the information about the blockchain $\mathcal{T}$, which is not included in the $deposit$ transaction to reduce on-chain transaction size and associated fees. 


\vspace{1mm} \noindent \blackding{2} 
Upon receiving the client's $request$, the enclave $\eta_i$ verifies whether the $deposit$ transaction is finalized on the blockchain $\mathcal{S}$ by synchronizing with $\mathcal{S}$ (details in \secref{sec:synchronization}). 
Upon successful verification, operators check whether  $s_{addr}$ and \textsf{vault} of the $deposit$ match the information in $request$.
Then, they compute the amount of currency $t$ by $\textsc{AMM}.exchange(v)$, where \textsf{vault} (resp., \textsf{vault$_T$}) is the amount of currency $t$ (resp., $s$).
Then, the enclave $\eta_i$ generates a $transfer$ transaction for the blockchain $\mathcal{T}$, denoted as $\left<id_T, v_T, r_{addr} \right>$, adds it to the memory pool, and awaits consensus. 


\vspace{1mm} \noindent \blackding{3} 
The enclaves in operators run a RAFT protocol to achieve agreement on the transaction $\left< \textsc{transfer}, id, s_{addr}, v, id_T, r_{addr} \right>$. 
Note that transactions, aiming at the same blockchain $\mathcal{T}$, are batched and signed by operators and subsequently submitted to the \textsf{vault$_T$}.
Thus, \textsf{vault$_T$} can process several transactions through one-time signature verification, optimizing on-chain computing cost.
During the consensus, $\eta_i$ signs batches with its private key for the blockchain $\mathcal{T}$, and broadcasts the signature $sig_i$ to other TEEs. 
This private key is generated during the initialization and the public key is recorded in the \textsf{vault$_T$}.

\vspace{1mm} \noindent \blackding{4} 
Upon collecting more than $f$ signatures, $\eta_i$ generates a $m$-$n$ multi-signature $\sigma$ for the batch of transactions, where $m=f+1$.
Then, $\eta_i$ submits these transactions together with the multi-signature $\left<txs \right>_{\sigma}$ to the blockchain $\mathcal{T}$.
Miners on the blockchain $\mathcal{T}$ verify the multi-signature and execute the transactions. 
For each transaction $transfer$, the designated receiver account, as specified by the client, receives the transfer from \textsf{vault$_T$}.

\vspace{1mm} \noindent \blackding{5} 
After $transfer$ is finalized on the blockchain $\mathcal{T}$, operators return the execution result to the respective clients associated with these transactions.

\begin{figure}[t]
    \centering
    \includegraphics[width=8.5cm]{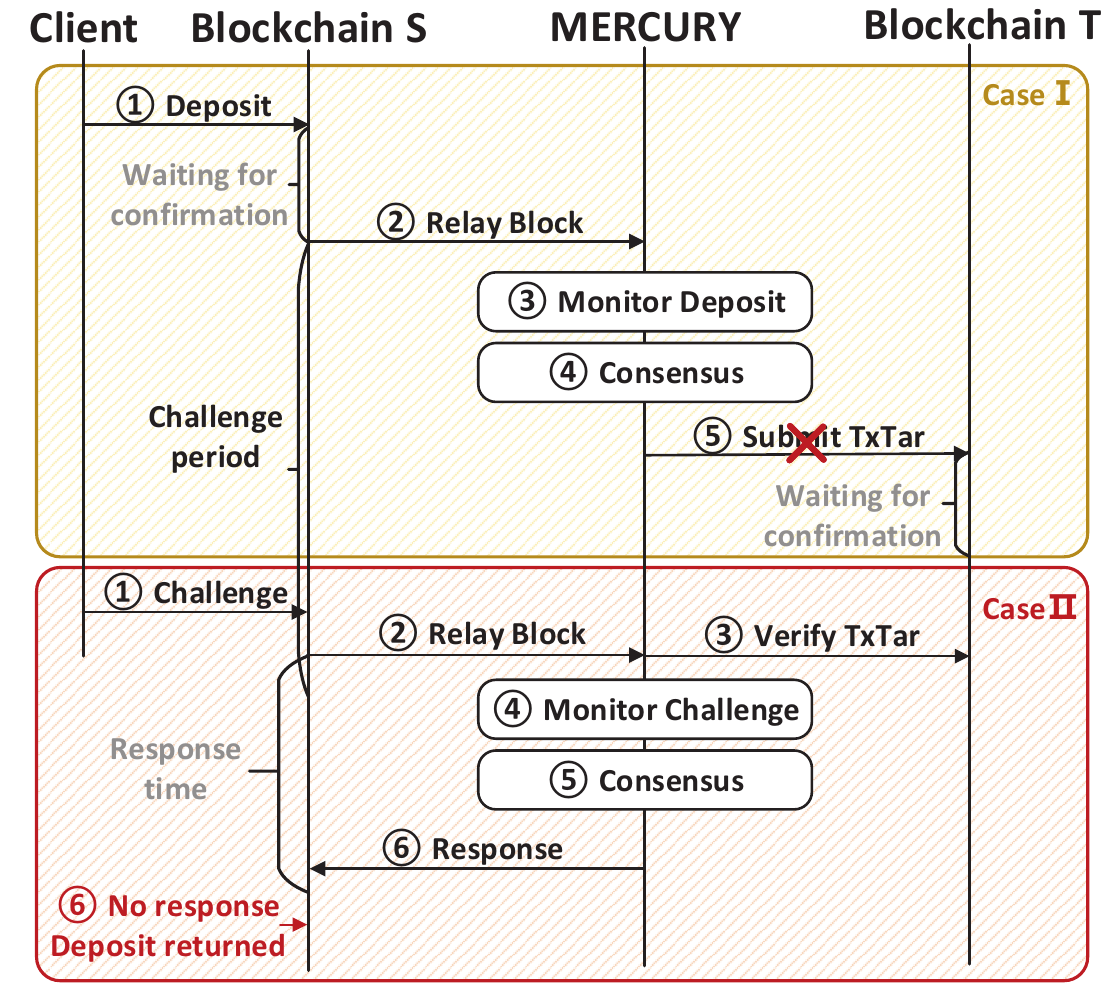}
    \caption{The design of the challenge-response protocol. Case \uppercase\expandafter{\romannumeral1} represents the normal-case operation, whereas case \uppercase\expandafter{\romannumeral2} represents the case of failures. 
    The red cross represents the process that cannot be completed when operators fail. 
    }
    \label{fig:transaction}
\end{figure}

\subsection{Safe Redemption} \label{sec:failures}
In this section, we propose a practical challenge-response mechanism deployed in \textsf{vault$_S$} to address the safe redemption issue.
As mentioned in~\secref{sec:model}, TEEs controlled by Byzantine operators cannot ensure availability.
Thus, there exists a potential scenario wherein transaction $tx_S$ on the blockchain $\mathcal{S}$ is finalized, while transaction $tx_T$ on the blockchain $\mathcal{T}$ remains unsubmitted due to the unavailability of the TEE, which breaks the atomicity of cross-chain exchange. 
The challenge-response mechanism provides an on-chain method in this scenario, which allows clients to withdraw their deposit by initiating a challenge on \textsf{vault$_S$}.


To implement the challenge-response mechanism, we made some adjustments to the normal case operation.
Upon the client's submission of a $deposit$, the currency is not directly transferred to the \textsf{vault$_T$}; instead, it remains pending. 
Thus, the client can withdraw the $deposit$ if the transaction on the blockchain $\mathcal{T}$ fails.
Only after the operators process the transaction and receive the confirmation on the blockchain $\mathcal{T}$ can the $deposit$ be locked (\ie, can not be withdrawn).

\figref{fig:transaction} illustrates the challenge process.
After creating the $deposit$ transaction, a client sets a timer of $\tau_c$. 
We can set $\tau_c$ to $2\delta_{S} + 2\delta_{E} + \delta_{T}$, where $\delta_{S}$ (resp., $\delta_{T}$) is the transaction confirmation time on the blockchain $\mathcal{S}$ (resp., blockchain $\mathcal{T}$), and $\delta_{E}$ is the time for the operators to process the exchange request.
If transfer $tx_T$ is finalized on the blockchain $\mathcal{T}$, operators will send $\left< \textsc{confirm}, id\right>_{\sigma}$ to \textsf{vault$_S$}.
Then, \textsf{vault$_S$} set the state of $tx_S$ to be confirmed and can no longer be withdrawn.
Else if no $tx_T$ on the blockchain $\mathcal{T}$ is finalized before the timer is triggered, the client invokes \textsc{StartChallenge()} (cf. Algorithm~\ref{alg:vault} line 18) on \textsf{vault$_S$}.

Once the client initiates a challenge on \textsf{vault$_S$}, a timer $\tau_w$ waiting for a response is set up.
To give the operators enough response time, $\tau_w \gg 2\delta_S + \delta_{E}$.
Note that $\tau_w$, as the waiting time for clients to refund in case of failure, will not affect the transaction latency under normal cases.
Thus, in practice, $\tau_w$ is set to be sufficiently long, providing the operators with ample time to resubmit transactions and respond.
When a challenge against $deposit_i$ is triggered in \textsf{vault$_S$}, the response falls into the following two cases.

\begin{packeditemize}
    \item \textbf{Operators are available.}
    When the operators witness the challenge from the blockchain $\mathcal{S}$, they verify whether $transfer_i$ (\ie, the corresponding transaction for the $deposit_i$ in the blockchain $\mathcal{S}$) on the blockchain $\mathcal{T}$ has been finalized.
    If $transfer_i$ has not been carried out, the operators can resubmit it on the blockchain $\mathcal{T}$. 
    Once the transaction is finalized, the operators return the result to \textsf{vault$_S$}, thereby successfully concluding the challenge. 
    The resolution of challenges similarly requires consensus, with operators reaching consensus on $\left<\textsc{challenge}, id \right >$ and collecting more than $f$ signatures.
    Then, the operators hand up the $deposit_i$ and the multi-signature of operators.
    $valut_S$ verifies the multi-signature and deletes $deposit_i$ if validated.

    \item \textbf{Operators are unavailable.} 
    The operators are unable to respond to the challenge on the blockchain $\mathcal{S}$.
    Once $T_w$ has elapsed, the client can claim a refund on \textsf{vault$_S$} by invoking \textsc{ResolveChallenge}$(id)$ (cf. Algorithm~\ref{alg:vault} line 24).
    Then, the \textsf{vault$_S$} refunds the $deposit_i$ to the client if the waiting time $\tau_w$ has elapsed.
\end{packeditemize}

To counteract potential malicious challenges, the client is required to provide a pledge while starting the challenge.
If the challenge is successful, the client receives a refund along with the pledge; otherwise, the pledge is forfeited.
Besides, to reduce on-chain storage, our approach retains only outstanding $deposits$ on \textsf{vault$_S$}, ensuring that completed ones are removed. 
After $transfer_i$ has been finalized on the blockchain $\mathcal{T}$, operators delete $deposit_i$ from $Dep$ on the blockchain $\mathcal{S}$, and the associated $request$ cannot be challenged anymore.
However, this approach increases the interactions between operators and the blockchain $\mathcal{S}$.
Thus, we set up checkpoints to batch-confirm transactions. 
Operators upload checkpoints after acquiring a fixed number of completed cross-chain exchanges or after a designated period.
Specifically, operators upload a checkpoint through \textsc{UpdateCheckpoint}$(idSet,signs)$ (cf. Algorithm~\ref{alg:vault} line 29), where $idSet$ is the list of cross-chain deposits to be confirmed and $signs$ is the $f+1$ multi-signature of operators.

\subsection{{Transaction Verification inside TEE}}\label{sec:synchronization}
To verify transactions on the blockchain, operators need to verify transactions on the blockchain $\mathcal{S}$ and blockchain $\mathcal{T}$ inside TEE.
However, given the potential for malicious operators, there exists a risk for enclaves to receive forged blocks.
Besides, the current storage requirement for a full node of Ethereum exceeds 600GB~\cite{ethwiki}, which is impractical for enclaves.
Inspired by the light-client design~\cite{ethlightclient}, we design a verification strategy in enclave $\eta_i$ that checks the validity of received blocks without withholding the whole blockchain.
Ethereum simplifies the verification for the light clients by introducing a sync committee consisting of 512 randomly selected validators. 
The members of the sync committee are expected to sign every block header at each slot (\ie, the time interval for block proposal in Ethereum), among which there is a different node assigned to propose a new block.
Besides, the current and the next syn committee can also be observed within the finalized blocks.
Thus, we adapt the design of the sync committee.


During the initialization phase, $\eta_i$ first synchronizes the existing $k$ finalized blockchain headers BHF and related block data.
$b_h$ is the latest finalized block with height $h$.
$\eta_i$ verifies that ($i$) BHF itself is consistent, ($ii$) the latest finalized block ($b_h$) has enough signatures of the sync committee.
Then, $\eta_i$ generates a proof $\rho_i^{bc} \leftarrow \left< b_h.bh \right>_{\sigma}$, which is the header of $b_h$ signed by enclaves, and hand out to the vault.
Vault verifies $\rho_i^{bc}$ and decides whether enclave $\eta_i$ has gotten the latest $k$ finalized block $b_h$ with blockchain, where $k$ is set as the upper bound to the time an enclave may lag behind.
If passed, the enclave $\eta_i$ is considered to be initially synchronized successfully.
{Then, enclave $\eta_i$ can discard block data 
and retain current and next sync committee members, along with BHF storing finalized block headers.}



After the initialization, $\eta_i$ continuously receives finalized blocks.
Once a block is finalized~\cite{buterin2020combining} and verified by $\eta_i$, its header, along with all its ancestors, is placed in BHF. 
To verify that a block $B$ is finalized, $\eta_i$ needs to verify the validity of the sync committee and their signatures.
By comparing the sync committee with the expected one conveyed in the historical finalized block, $\eta_i$ can verify the validity of the committee as well as these signatures.
Therefore, enclave $\eta_i$ only needs to store historical block headers since initialization.



\subsection{Vault Liquidity and Rewards}\label{sec:reward}
As mentioned in~\secref{sec:init}, we deploy $vaults$ jointly controlled by the operators, allowing any parties to provide $vaults$ with liquidity.
To motivate operators to diligently handle transactions and LPs to provide liquidity, we distribute transaction fees among LPs and operators. 
The idea behind this design is that incentive mechanisms such as transaction fees and block rewards play an important role in securing blockchain systems~\cite{eyal2018majority, Niu2019, crystal}. 

Suppose the transaction fee for one transaction is $F$.
The LP provides an amount $L$ of liquidity to \textsf{vault}, the total liquidity in \textsf{vault} is $\mathcal{T}$. 
The reward for the LP is then proportional to the fraction of total liquidity it provided, expressed as: $F_L = (L / T)Fr$, where $r$ is a pre-determined parameter representing the proportion of rewards in transaction fees for LPs.
For each completed transaction, operators that participate in the final $m-n$ votes of the consensus process receive rewards.
For a transaction with $m$ signatures, each TEE participating in its consensus can obtain $F_T = (1 / m)F(1-r)$.

Furthermore, to incentivize continued service provision, each operator is required to deposit upon registration. 
If \sysname is deactivated, settlement is executed on the blockchain, and any losses incurred by \sysname due to TEE unavailability will be borne by the TEE providers. 


\section{Security Analysis} \label{sec:analysis}
In this section, we conduct a security analysis of \sysname.
As described in~\secref{sec:model}, \sysname is \textit{trust-minimized} since there is no trust assumption of operators in the exchange.
Then, we give a proof of Atomicity and 
No online-client requirement.
Recall that $tx_S$ is a $deposit$ transaction on the blockchain $\mathcal{S}$ and $tx_T$ is a $transfer$ transaction. 
We establish several technical lemmas. 

\begin{lemma}\label{lemmap3g}
If $tx_S$ is confirmed on the blockchain $\mathcal{S}$, $tx_T$ must be confirmed on the blockchain $\mathcal{T}$.
\end{lemma}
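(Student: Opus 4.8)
The plan is to prove the contrapositive via a careful case analysis on what happens after the client deposits. Suppose $tx_S$ (the $deposit$) is confirmed on $\mathcal{S}$ with a unique on-chain identifier $id$. I want to show that $tx_T$ (the corresponding $transfer$) is eventually confirmed on $\mathcal{T}$. The key observation is the ordering enforced by the design in \secref{sec:failures}: a $deposit$ entry $Dep_{id}$ is only deleted from \textsf{vault$_S$} (either individually or through \textsc{UpdateCheckpoint}) after operators have obtained an $f+1$ multi-signature, and --- crucially --- honest enclaves only sign a $\textsc{confirm}$ or checkpoint message for $id$ after verifying (via the transaction-verification mechanism of \secref{sec:synchronization}) that $tx_T$ is finalized on $\mathcal{T}$. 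So I would first argue: if $Dep_{id}$ is ever removed from \textsf{vault$_S$}, then $tx_T$ was finalized on $\mathcal{T}$ beforehand, and by the finality assumption on $\mathcal{T}$ it stays finalized. That handles the ``confirmed/deleted'' branch.

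Next I would handle the branch where $Dep_{id}$ is \emph{not} removed. Here I invoke the challenge-response mechanism: since the client synchronizes with $\mathcal{S}$ only to initiate transactions but we are proving the statement that holds regardless, the client (or, to be safe, any party, since the pledge incentive makes this rational) can invoke \textsc{StartChallenge}$(id,\dots)$. When the challenge event is witnessed, by the liveness of $\mathcal{S}$ the event is eventually observed by all honest enclaves. Now I split on availability: if at least $f+1$ enclaves are available, they reach RAFT consensus on $\langle\textsc{challenge},id\rangle$, resubmit $tx_T$ to $\mathcal{T}$, and by liveness of $\mathcal{T}$ it is finalized --- giving $tx_T$ confirmed, as desired. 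If fewer than $f+1$ are available (the ``operators unavailable'' case), then after $\tau_w$ the client calls \textsc{ResolveChallenge}$(id)$ and the deposit is refunded; but then $tx_S$ has been effectively reversed on $\mathcal{S}$ --- the vault no longer holds the locked funds against $id$ --- so $tx_S$ is \emph{not} confirmed in the settlement sense the lemma intends, contradicting our hypothesis. I would make this precise by noting that ``$tx_S$ confirmed on $\mathcal{S}$'' in the statement of the lemma means the deposit is settled (locked and not refunded), matching the atomicity goal; a refunded deposit does not count as a confirmed $tx_S$.

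The main obstacle I anticipate is pinning down the precise semantics of ``$tx_S$ is confirmed'' so that the argument is not circular: naively, $tx_S$ is ``on the chain'' the moment \textsc{Deposit} executes, but then the unavailable-operators branch would refute the lemma. The resolution --- which I would state explicitly as the first step --- is that confirmation here means the deposit has reached the locked/settled state (equivalently, was not refunded via \textsc{ResolveChallenge}), and under that reading the refund branch is vacuous. A secondary technical point to nail down is that honest enclaves genuinely never sign $\textsc{confirm}$ or a checkpoint containing $id$ without having verified $tx_T$'s finality; this follows from the TEE integrity assumption (the program $prog$ is the one attested at registration and cannot be subverted) combined with the correctness of the light-client verification in \secref{sec:synchronization}. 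Once these two points are fixed, the case analysis is routine: in every branch where the deposit stays settled, $tx_T$ must have been finalized on $\mathcal{T}$, and finality of $\mathcal{T}$ makes that permanent.
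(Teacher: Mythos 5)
Your proof is correct and follows essentially the same route as the paper's: the paper argues by contradiction that if $tx_T$ were unconfirmed, TEE integrity would prevent any $f+1$-signed \textsc{confirm} for the deposit from ever being accepted by \textsf{vault$_S$}, leaving the deposit challengeable and cancellable, contradicting its confirmed status. Your explicit pinning-down of ``$tx_S$ confirmed'' as the settled/non-refundable state in \textsf{vault$_S$}, and your extra split on operator availability during the challenge, make explicit what the paper leaves implicit but do not change the underlying argument.
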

\begin{proof}
Proof by contradiction. Suppose that $tx_T$ is not confirmed on the blockchain $\mathcal{T}$.
Thus, the request that confirm the associated $tx_S$ in \textsf{vault$_S$} is invalid.
As mentioned in~\secref{sec:model}, programs inside TEE are protected and can not be compromised, \ie, this invalid request can not be signed by the secret key inside the operators' TEE.
Then, the invalid request without $f+1$ signatures from operators' TEE will not be accepted by \textsf{vault$_S$} and no one can confirm $tx_S$ in \textsf{vault$_S$}.
Therefore, after a period of $\tau_c$, the client can issue a challenge and cancel the $tx_S$.
This contradicts the fact that $tx_S$ is confirmed on the blockchain $\mathcal{S}$.
\end{proof}

\begin{lemma}\label{lemmap4g}
If $tx_T$ is confirmed on the blockchain $\mathcal{T}$, $tx_S$ must be confirmed on the blockchain $\mathcal{S}$.
\end{lemma}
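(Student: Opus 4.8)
The plan is to argue by contradiction, mirroring the structure of Lemma~\ref{lemmap3g} but now exploiting the ordering enforced by the normal-case protocol and the challenge-response mechanism in \textsf{vault$_S$}. Suppose $tx_T$ is confirmed on the blockchain $\mathcal{T}$ but $tx_S$ is \emph{not} confirmed on the blockchain $\mathcal{S}$. The only way $tx_T$ can appear on $\mathcal{T}$ is as part of a batched $transfer$ carrying an $m$-$n$ multi-signature $\sigma$ with $m=f+1$; since the private keys used to form $\sigma$ live only inside the enclaves and the enclave program $prog$ faithfully follows the protocol, at least one honest enclave must have signed this $transfer$, and it does so only after verifying that the corresponding $deposit$ transaction $tx_S$ was \emph{finalized} on $\mathcal{S}$ (phase~\blackding{2} of \secref{sec:normalcase}, using the verification of \secref{sec:synchronization}).

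Next I would turn the word ``confirmed'' for $tx_S$ into a statement about \textsf{vault$_S$}. By the blockchain finality assumption of \secref{sec:model}, once $tx_S$ is finalized on $\mathcal{S}$ it remains finalized forever, so the associated $deposit_i$ record exists (and stays) in $Dep$ on \textsf{vault$_S$} unless it is explicitly deleted by a \textsc{ResolveChallenge} refund or removed via a \textsc{UpdateCheckpoint}. I would then argue that neither deletion path is consistent with $tx_T$ being confirmed: a refund via \textsc{ResolveChallenge} only fires after a challenge times out without the operators producing the $\left<\textsc{confirm},id\right>_\sigma$ (again an enclave-signed message that honest enclaves emit precisely because $tx_T$ is finalized), and a checkpoint deletion likewise requires the $f+1$ enclave multi-signature that an honest enclave contributes only for genuinely completed exchanges. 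Hence if $tx_T$ is confirmed, the honest enclaves will eventually drive $deposit_i$ to the ``confirmed'' state in \textsf{vault$_S$} (or it is already there), which is exactly what it means for $tx_S$ to be confirmed on $\mathcal{S}$ --- contradicting the assumption.

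The main obstacle, and the place where I would be most careful, is pinning down the precise meaning of ``$tx_S$ is confirmed on $\mathcal{S}$'': it is tempting to conflate ``$tx_S$ is finalized on chain $\mathcal{S}$'' with ``the $deposit$ is marked confirmed (locked) in \textsf{vault$_S$}.'' The lemma only makes sense under the latter reading, since a bare finalized $deposit$ that is later refunded should count as \emph{not} confirmed. So the real content is: if $tx_T$ is on $\mathcal{T}$, then the deposit in \textsf{vault$_S$} cannot have been refunded and must (eventually) be locked. Establishing this cleanly needs the integrity of $prog$, the unforgeability of the multi-signature scheme (so no locking/confirm/checkpoint message can be fabricated without $f+1$ genuine enclave signatures, and no refund message is producible at all since \textsc{ResolveChallenge} is client-invoked but gated on the timer and the absence of a confirm), and the liveness of $\mathcal{S}$ together with the availability of the honest enclaves during the response window $\tau_w$. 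A secondary subtlety is the case where a client never issues a challenge: then $deposit_i$ simply sits pending until the operators' $\left<\textsc{confirm},id\right>_\sigma$ arrives, so I would note that the $\tau_c$/$\tau_w$ timing is chosen so that the honest enclaves always have time to confirm before any refund becomes possible, closing the gap.
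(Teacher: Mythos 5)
Your proof is correct and follows essentially the same route as the paper's: argue by contradiction, use TEE integrity together with the $f+1$ multi-signature to conclude that a confirmed $tx_T$ presupposes a finalized deposit on $\mathcal{S}$, and then observe that operators seeing $tx_T$ will confirm the corresponding $tx_S$ in \textsf{vault$_S$}. You are in fact somewhat more careful than the paper on one point --- explicitly ruling out the \textsc{ResolveChallenge} refund and checkpoint-deletion paths and flagging the $\tau_w$ timing/availability assumption --- whereas the paper's Case~2 silently relies on operators being able to confirm $tx_S$ before any refund can occur.
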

\begin{proof}
Proof by contradiction. Suppose that $tx_S$ is not confirmed on the blockchain $\mathcal{S}$. Then, we have two cases as follows.

\begin{packeditemize}
    \item \textbf{Case 1.} $tx_S$ is not even submitted to the blockchain $\mathcal{S}$.
    Thus, the associated transaction $tx_T$ in \textsf{vault$_T$} is invalid.
    In this case, $tx_T$ can not be signed by $f+1$ operator's TEE and can not be confirmed on the blockchain $\mathcal{T}$.
    

    \item \textbf{Case 2.} $tx_S$ is pending on the blockchain $\mathcal{S}$ (but not confirmed).
    According to~\secref{sec:failures}, operators will generate the associated $tx_T$ and submit it to the \textsf{vault$_T$}.
    If transaction $tx_T$ is not confirmed on blockchain $\mathcal{T}$, the assertion is proven. 
    Conversely, if $tx_T$ is confirmed on blockchain $\mathcal{T}$, operators will observe $tx_T$ and subsequently confirm the corresponding transaction $tx_S$ on \textsf{vault$_S$}. 
    Therefore, $tx_S$ will be confirmed, a contradiction.
    
\end{packeditemize}
\end{proof}

\begin{theorem}[Atomicity]
Both $tx_S$ and $tx_T$ are either confirmed or not confirmed at all.
\end{theorem}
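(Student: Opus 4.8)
The plan is to prove the Atomicity theorem as a direct corollary of the two preceding lemmas, using a short case analysis on the possible confirmation states of $tx_S$ on the blockchain $\mathcal{S}$. The claim to establish is the biconditional: $tx_S$ is confirmed on $\mathcal{S}$ if and only if $tx_T$ is confirmed on $\mathcal{T}$; equivalently, it is never the case that exactly one of the two transactions is confirmed.

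First I would split into two cases according to whether $tx_S$ is confirmed on $\mathcal{S}$. In the first case, assume $tx_S$ is confirmed on $\mathcal{S}$. Then Lemma~\ref{lemmap3g} immediately gives that $tx_T$ is confirmed on $\mathcal{T}$, so both are confirmed. In the second case, assume $tx_S$ is not confirmed on $\mathcal{S}$. Here I would argue by contradiction: if $tx_T$ were confirmed on $\mathcal{T}$, then Lemma~\ref{lemmap4g} would force $tx_S$ to be confirmed on $\mathcal{S}$, contradicting the case assumption. Hence $tx_T$ is not confirmed on $\mathcal{T}$, and neither transaction is confirmed. Combining the two cases, the two transactions share the same confirmation status, which is exactly atomicity.

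I do not expect a substantive obstacle, since all the technical work — reasoning about the TEE integrity guaranteeing that invalid requests cannot collect $f+1$ signatures, the challenge-response timer $\tau_c$ allowing a client to cancel $tx_S$, and the operators' obligation to confirm $tx_S$ on \textsf{vault$_S$} once $tx_T$ is finalized — has already been discharged inside Lemmas~\ref{lemmap3g} and~\ref{lemmap4g}. The only mild subtlety worth flagging is the implicit appeal to the blockchain finality assumption from Section~\ref{sec:model}: once a transaction is confirmed it stays confirmed, so the confirmation states are well-defined and the biconditional is stable over time. If a sharper statement were desired one could also note that the remaining ``mixed'' configurations (for instance, $tx_S$ merely pending while $tx_T$ is or is not confirmed) are precisely the situations analyzed in Case~2 of Lemma~\ref{lemmap4g} and in the challenge-response discussion of Section~\ref{sec:failures}, so no configuration escapes the dichotomy. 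The proof is therefore essentially a two-line combination of the lemmas.
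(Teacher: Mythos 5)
Your proposal is correct and matches the paper's argument: the paper's proof of the Atomicity theorem is simply that it follows immediately from Lemma~\ref{lemmap3g} and Lemma~\ref{lemmap4g}, and your case analysis (Lemma~\ref{lemmap3g} for the case $tx_S$ confirmed, the contrapositive of Lemma~\ref{lemmap4g} for the case $tx_S$ not confirmed) is just the explicit spelling-out of that same two-lemma combination.
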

\begin{proof}
It follows immediately from Lemma~\ref{lemmap3g} and Lemma~\ref{lemmap4g}.
\end{proof}

\begin{theorem}[No online-client Requirement]
If the client is offline at any finite time $\tau$ after initiating a cross-chain transaction, the atomicity can ultimately be guaranteed.
\end{theorem}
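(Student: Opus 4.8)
The plan is to reduce the ``No online-client requirement'' theorem to the Atomicity theorem by showing that, whatever the client does or fails to do after initiating the request, the challenge-response mechanism eventually settles the exchange into one of the two atomic outcomes, and in neither outcome does the offline client lose funds. First I would fix notation: let $\tau$ be the finite time at which the client goes offline, having already broadcast the $deposit$ transaction $tx_S$ (so $tx_S$ is at least pending on $\mathcal{S}$) and sent the corresponding $request$ to the operators. The key observation is that the timers $\tau_c$ and $\tau_w$ are on-chain quantities enforced by \textsf{vault$_S$} via $block.timestamp$, so they continue to run and can be acted upon by the client once it comes back online; nothing in the protocol requires the client to be continuously synchronized between $\tau$ and the moment it wishes to reclaim or observe its funds.

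Next I would do a case split on the state of $tx_T$, mirroring the structure of Lemma~\ref{lemmap3g} and Lemma~\ref{lemmap4g}. In the first case $tx_T$ is eventually confirmed on $\mathcal{T}$: then by construction the operators send $\langle \textsc{confirm}, id\rangle_\sigma$ to \textsf{vault$_S$}, $tx_S$ becomes confirmed, and the client (whenever it reconnects) finds the bought currency $t$ credited at $r_{addr}$ on $\mathcal{T}$ and its deposit legitimately spent --- the ``both confirmed'' branch of Atomicity, with no loss. In the second case $tx_T$ is never confirmed on $\mathcal{T}$ (because the TEEs are unavailable or the operators misbehave): I would argue that $tx_S$ can never be moved to the confirmed state in \textsf{vault$_S$}, exactly as in Lemma~\ref{lemmap3g}, since confirming it requires a $\langle\textsc{confirm},id\rangle$ message carrying $f+1$ valid TEE signatures, which cannot be produced for an invalid request because the programs inside the honest-behaving TEEs are off-limits. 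Hence the deposit remains withdrawable; after the on-chain timer $\tau_c$ expires the client --- now back online --- invokes \textsc{StartChallenge}, waits out $\tau_w$, and invokes \textsc{ResolveChallenge} to recover the full deposit plus pledge. This is the ``neither confirmed'' branch, again with no loss.

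I would then bundle the two cases: in both, the final on-chain state respects Atomicity (invoking the Atomicity theorem directly), and in both the client's net position is non-negative --- either it receives $t$ for its $s$ at the AMM rate, or it gets $s$ back in full. The only remaining subtlety is liveness of the refund path: I must argue that the client's own challenge transactions eventually get included on $\mathcal{S}$. This follows from the blockchain model assumption in \secref{sec:model} that $\mathcal{S}$ is safe and live, so any transaction the (reconnected) client submits is eventually finalized; and since $\tau$ is finite, the client does reconnect at some finite time and can fire off \textsc{StartChallenge}/\textsc{ResolveChallenge}. I would close by noting the pledge requirement does not break the guarantee because on a successful challenge the pledge is returned.

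The main obstacle I anticipate is making precise the claim that being offline ``at any finite time $\tau$'' is harmless --- i.e., carefully delimiting \emph{what} the client must have done before going offline (submit $tx_S$ and send $request$) versus what it may safely defer (monitoring $\mathcal{T}$, issuing challenges), and checking that the only deadline that matters, $\tau_c$, is generous enough ($\tau_c = 2\delta_S + 2\delta_E + \delta_T$) that an honest-but-transiently-offline client who reconnects before $tx_S$ is ever confirmed can still start the challenge in time, while an honest exchange that did complete $tx_T$ will have already pushed $\langle\textsc{confirm},id\rangle$ on-chain. Getting this timing bookkeeping exactly right, rather than the case analysis itself, is where the care is needed.
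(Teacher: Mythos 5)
Your proposal is correct and follows essentially the same route as the paper: a case split on whether $tx_T$ is eventually confirmed, with the confirmed case handled by the operators' on-chain \textsc{confirm} (independent of the client being online) and the failed case handled by appealing to Lemma~\ref{lemmap3g} to show $tx_S$ stays unconfirmed and hence challengeable whenever the client reconnects. Your extra bookkeeping about on-chain timers and refund liveness is more careful than the paper's proof, and your worry about ``starting the challenge in time'' is moot since $\tau_c$ is only a lower bound on when a challenge may be raised, not a deadline that can be missed.
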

\begin{proof}
The client initiates the cross-chain transaction by submitting $tx_S$ to the \textsf{vault$_S$}.
If the $tx_S$ fails, cross-chain transactions will not start at all, and both $tx_S$ and $tx_T$ are not confirmed.
If the $tx_S$ is executed on the blockchain $\mathcal{S}$, the $tx_S$ remains pending on the \textsf{vault$_S$}.
Then the execution of $tx_T$ falls into the following two cases.
In the first case, $tx_T$ is confirmed on the blockchain $\mathcal{T}$, $tx_S$ will also be confirmed by the operators, whether the client is online or not.
At this time, both $tx_S$ and $tx_T$ are confirmed
In the second case, $tx_T$ is failed. 
By Lemma~\ref{lemmap3g}, the $tx_S$ in \textsf{vault$_S$} will not be confirmed, and the client can challenge it at any time.
Thus, as the finite offline time $\tau$ pass, the client can initiate the challenge and cancel the $tx_S$, \ie, both $tx_S$ and $tx_T$ are not confirmed.
\end{proof}

\section{Evaluation}\label{sec:evaluate}
In this section, we evaluate \sysname in terms of on-chain verification costs and off-chain transaction throughput and latency.
We compare \sysname with three state-of-the-art counterparts, XClaim~\cite{zamyatin2019xclaim}, ZK-Bridge~\cite{xie2022zkbridge} and Tesseract~\cite{tesseract}. 
This section answers the following three questions: 
\begin{packeditemize}
    \item \textbf{Q1:} 
    What is the breakdown of on-chain costs in \sysname? 
    
    \item \textbf{Q2:} How does \sysname perform in terms of throughput and latency? 
    
    \item \textbf{Q3:} 
    How does \sysname compare with counterparts? 
\end{packeditemize} \
    

\subsection{System Implementation and Setup}\label{sec:imple}



We build a prototype of \sysname to evaluate its performance, employing the Ethereum test network Sepolia~\cite{sepolia}, which is consistent with the version of the Ethereum main network.
We use Solidity 0.8.0~\cite{solidity} to develop the smart contract (\ie,  $vaults$).
We deploy \sysname on AWS
EC2 machines with one m6a.xlarge instance per operator. 
All instances run with the Intel(R) Xeon(R) Platinum 8275CL CPU @ 3.00GHz and 16G RAM running Ubuntu Linux 22.04, with  AMD SEV as the TEE component~\cite{sev2020strengthening}.
The consensus protocol among operators relies on the implementation of RAFT~\cite{ongaro2014search}.
We conduct experiments in a LAN environment with 0.5 ± 0.03 ms and a WAN environment with 20 ± 0.1 ms, each machine was equipped with a private network interface with 100MB bandwidth. 

\begin{table}[t]
    \footnotesize
    \centering
    \setlength{\abovecaptionskip}{0cm}
    \caption{\textbf{Systerm Parameters.}} \label{table:parameter}
    \begin{tabular}{@{}ll@{}}
        \toprule[1pt]
        ~~~~~\textbf{Parameters}         & ~~~~~~~\textbf{Values}  \\
        \midrule
        Number of operators                       & 5, \textbf{10}, 15, 20       \\
        Batch size for RAFT consensus             & 500, 1000, 1500, \textbf{2000}      \\
        Number of transactions submitted on-chain           & 1, 10, \textbf{20}, 50, 100\\
        \bottomrule[1pt]
    \end{tabular}
\end{table}

We vary parameters in Table~\ref{table:parameter} with default values in bold to evaluate system gas cost, throughput, and latency. 
The experiment focuses on the cost on the chain and the throughput and delay of off-chain transaction processing.
Precisely, cost measures the gas cost for executing functions of the smart contract (\ie, \textsf{vault}), some of which can include multiple transactions simultaneously, thus reducing the shared cost for each transaction.
Throughput measures the number of transactions that operators can handle per second. 
Meanwhile, latency measures the time cost for operators to vote and process transactions.

\subsection{On-chain Cost of Functions} \label{sec:cost} 
Cross-chain exchange costs on the blockchain are measured in ``gas", which serves as a unit of measurement that gauges the computational effort needed to perform operations on Ethereum. To make these costs more understandable, we convert gas into USD with a gas price of 20 Gwei ($2\times10^{-8}$ ETH) and an ETH price of 2850 USD on 1 May 2024.


\bheading{Cost breakdown.} Table~\ref{table:evaluation} presents the on-chain costs of \sysname for each function involving 10 operators. 
Our measurements cover the entire process, including \textsc{Deposit}, \textsc{Transfer} (see~\ref{sec:normalcase}), and \textsc{Challenge} (see~\ref{sec:failures}) on Ethereum.
The \textsc{Transfer} function is particularly noteworthy as one of the most gas-consuming, approximately three times that of the Deposit method (84K gas). 
This higher cost is attributed to the verification of the multi-signature of TEEs, which demands a substantial amount of computation. 
However, with a batch size of 20 transactions, the per-transaction fee is amortizing to 14K gas (0.82 USD), less than a simple ETH transfer\footnote{Sample ETH transfer is an example of sending ETH in Ethereum, commonly employed for comparing gas fees~\cite{pose}.\label{footnote2}}.
A successful challenge-response involves both the \textsc{StartChallenge} and \textsc{ResolveChallenge} functions, totaling 85K gas (4.17 USD). The \textsc{UpdateCheckpoint} function consumes 260K gas. 
However, as we implement a checkpoint update every 20 transactions, the cost per transaction for \textsc{UpdateCheckpoint} is 1.48 USD.
Therefore, the total cost of a cross-chain transaction, including deposit, transfer and update checkpoint, is 125K gas per transaction.

\begin{table}[t]
\centering
    \begin{threeparttable}
    \caption{\textbf{On-chain cost.} The costs are in USD as per exchange rates of 1 May 2024: ETH/USD 2850 and gas price is 20 Gwei ($2\times10^{-8}$ ETH).}\label{table:evaluation}
    
    \begin{tabular}{@{}lrrc@{}}
        \toprule
        Functions           & Total (gas)  & Average (gas/tx)      & USD (\$) \\
        \midrule
        \textsc{Deposit}           & 84,404~~~~       &84404~~~~~~             & 4.81       \\
        \textsc{Transfer}          & 289,018~~~~      &14451~~~~~~             & 0.82        \\
        \textsc{StartChallenge}    & 47,351~~~~       &47351~~~~~~             & 2.70       \\
        \textsc{ResolveChallenge}  & 37,364~~~~       &37364~~~~~~             & 2.13       \\
        \textsc{UpadateCheckpoint} & 259814~~~~      &25981~~~~~~              & 1.48       \\
        \midrule
        Simple ETH transfer & 21,000~~~~ &21000~~~~~~         & 1.20      \\
        \bottomrule
    \end{tabular}
\end{threeparttable}
\end{table}

 \begin{figure}[t]
    \centering
    \includegraphics[width=8cm]{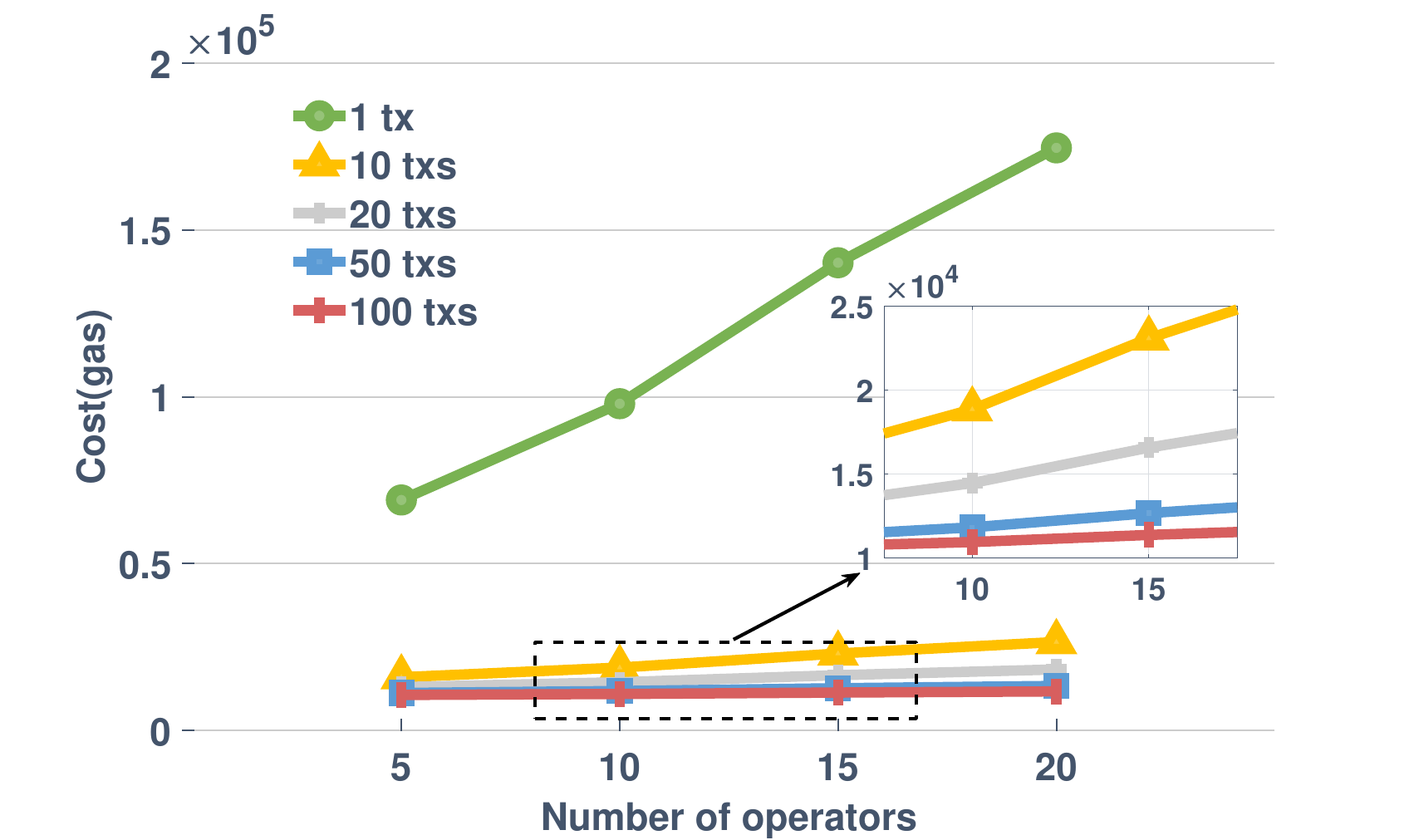}
    \caption{On-chain cost of the \textsc{Transfer} function with varying numbers of operators and transactions.}
    \label{fig:cost}
    \end{figure} 



\bheading{Cost of the \textsc{Transfer} function.}  
As mentioned in~\secref{sec:design}, \textsc{Transfer} function submits more than one transaction at a time.
Thus, we provide detailed insights into the average gas requirements for the \textsc{Transfer} function, which is the most gas-consuming function in \sysname. 
\figref{fig:cost} illustrates transfer costs in MERCURY across varying numbers of operators and batch sizes.
The observed trend indicates that the cost of \textsc{Transfer} grows with an increasing number of operators, reaching 175K gas when there are 20 operators without batching (\ie, 1 transaction per batch). 
This substantial cost underscores the necessity of implementing batching to reduce the per-transaction expenses.
As the batch size increases, there is a significant reduction in average costs. 
As the batch size further increases to 100 transactions, the cost diminishes to just 12K gas (0.67 USD).

\subsection{Performance Evaluation}\label{sec:performance}
We evaluate two performance metrics: 1) throughput, measured in thousands of transactions per second (KTPS),  representing the number of transactions executed per second, and 2) latency, measured in millisecond (ms), denoting the average end-to-end delay from the moment operators get the transactions until the submission of the transaction.



\bheading{Throughput and latency.}
\figref{fig:performance} illustrates the throughput and latency in \sysname across varying numbers of operators and batch sizes in LAN and WAN environments. 
In the LAN environment, throughput shows a modest decrease while latency increases as the number of operators rises, peaking at 38 ms with a batch size of 1000 and 20 operators. This occurs because the increase in the number of operators leads to a rise in messaging overhead, causing bandwidth inefficiencies.
Throughput exhibits an upward trend with increasing batch size, while latency decreases, reaching 10 ms at a batch size of 1000 with 5 operators. This is because, with larger batch sizes, \sysname can process more transactions concurrently.

In the WAN environment, throughput shows a modest decrease while latency increases as the number of operators rises. 
This trend is more pronounced than in the LAN environment because the increased communication delay in WAN amplifies the impact of adding more operators on system performance. 
Additionally, both throughput and latency increase with larger transaction batch sizes.


\bheading{TEE overhead.} 
We evaluate the throughput and latency of \sysname with and without SEV across 5-20 operators in both LAN and WAN environments.
In the LAN environment, the introduction of SEV results in a halving of throughput and nearly doubles the latency. 
This degradation occurs because operations executed inside a TEE require encryption and context switching from the host machine’s regular execution environment, which hampers performance. 
In the WAN environment, the introduction of SEV also leads to a decline in throughput and an increase in latency, although the trend is less pronounced. 
In this case, the inherent communication delay of the WAN environment becomes the primary factor affecting throughput and latency.



\begin{figure}[t]
	\centering
	\begin{subfigure}{0.48\linewidth}
		\centering
		\includegraphics[width=1.1\linewidth]{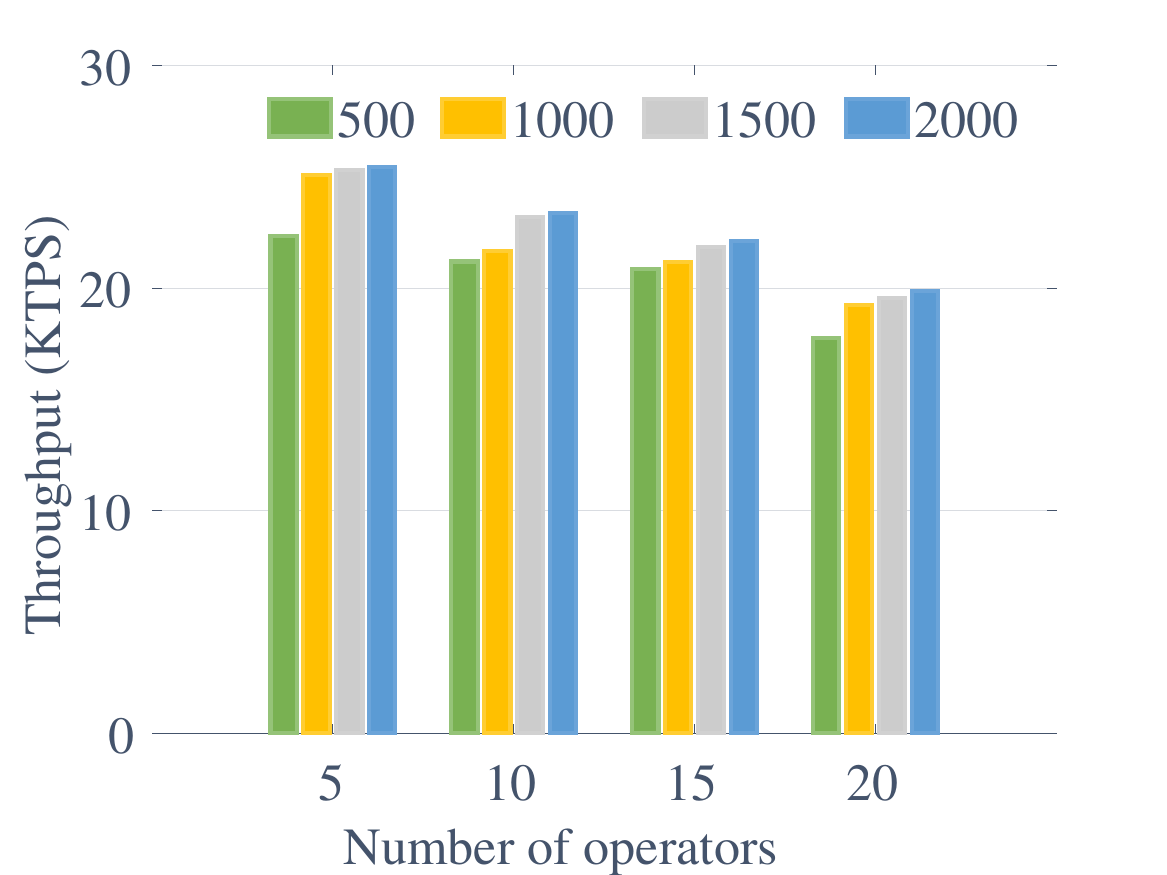}
		\caption{Throughput in LAN}
		\label{fig:throughputLAN}
	\end{subfigure}
	\hfill
	\begin{subfigure}{0.48\linewidth}
		\centering
		\includegraphics[width=1.1\linewidth]{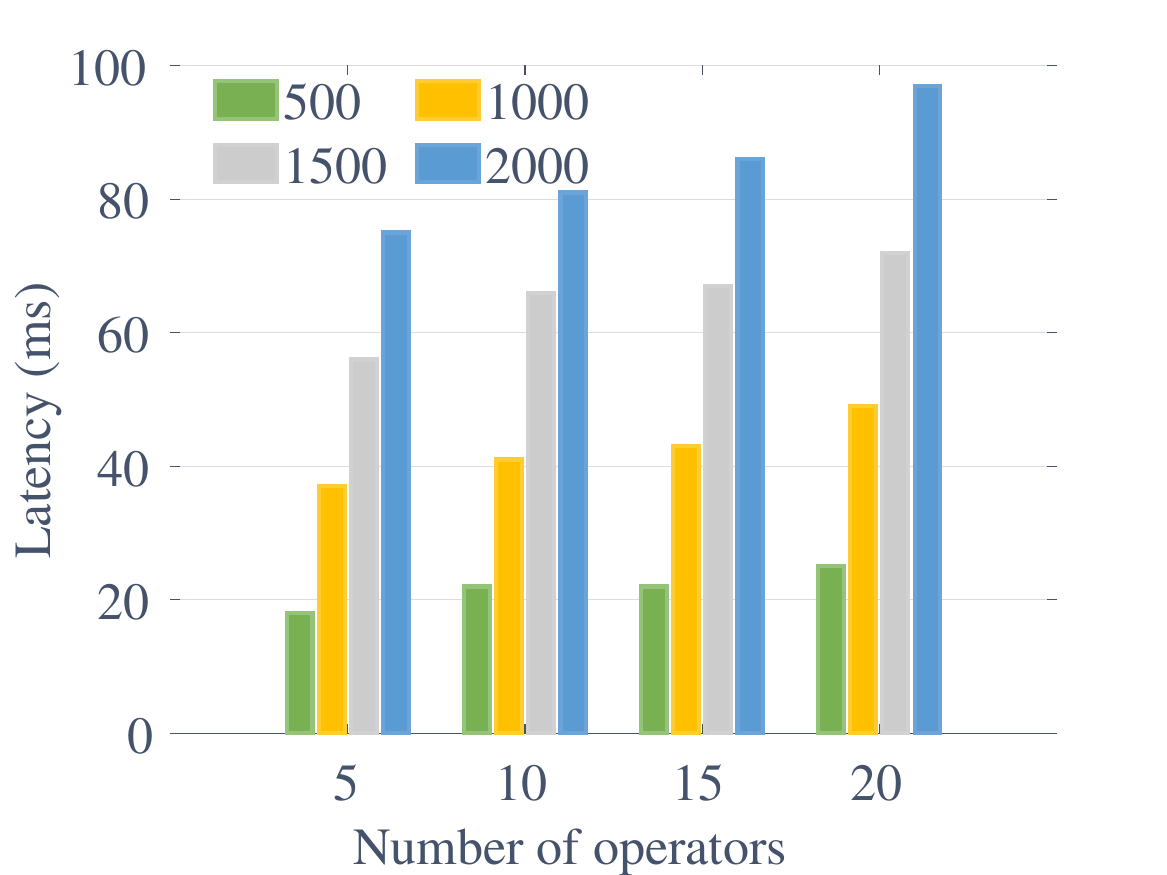}
		\caption{Latency in LAN}
		\label{fig:latencyLAN}
	\end{subfigure}
        \hfill
	\begin{subfigure}{0.48\linewidth}
		\centering
		\includegraphics[width=1.1\linewidth]{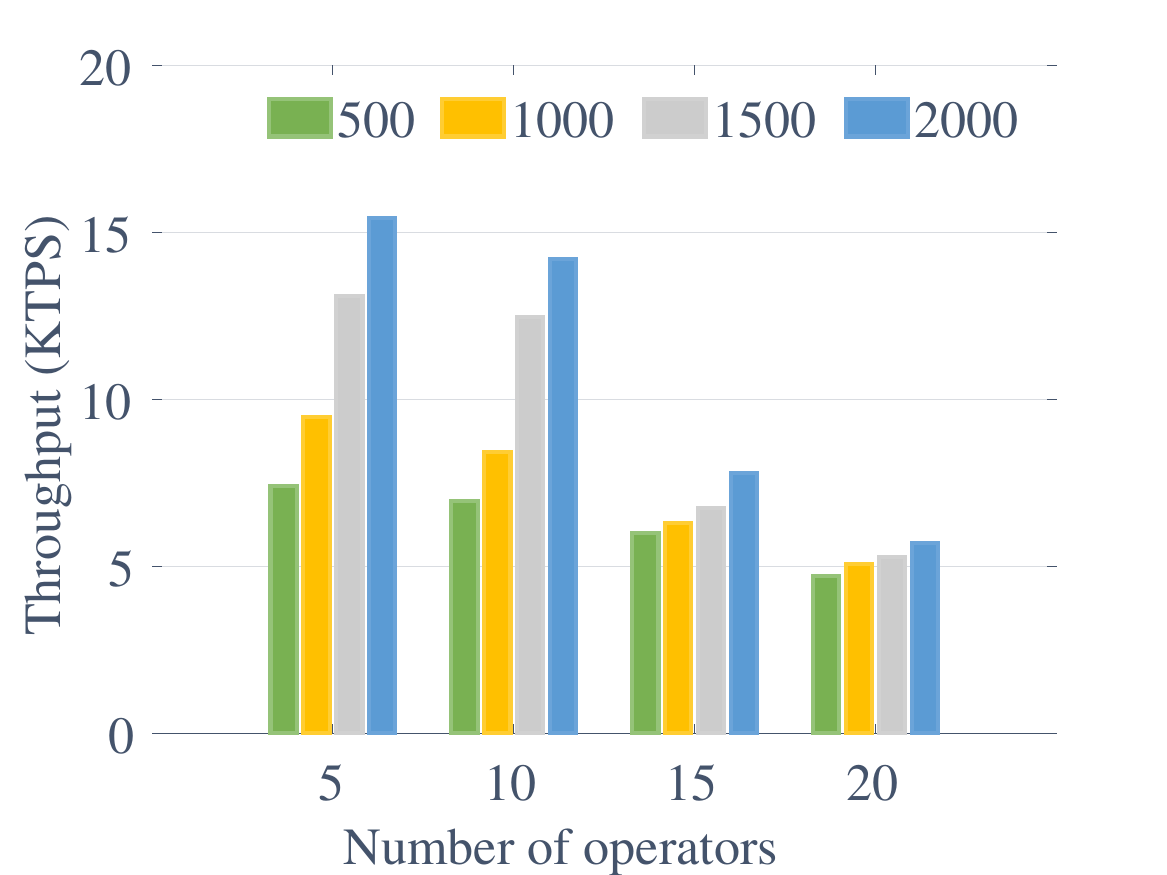}
		\caption{Throughput in WAN}
		\label{fig:throughputWAN}
	\end{subfigure}
	\hfill
	\begin{subfigure}{0.48\linewidth}
		\centering
		\includegraphics[width=1.1\linewidth]{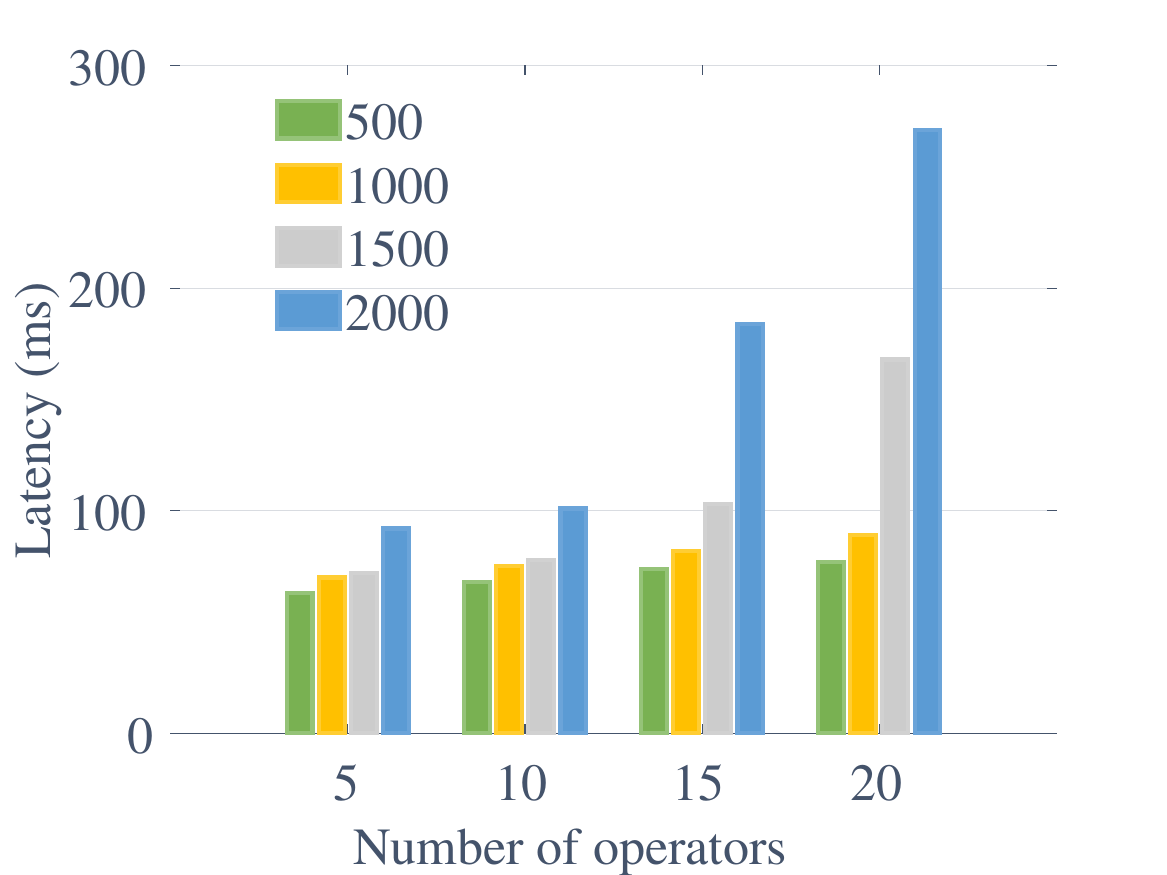}
		\caption{Latency in WAN}
		\label{fig:latencyWAN}
	\end{subfigure}
	\caption{Transaction processing performance with varying different numbers of operators and batch size in LAN and WAN.}
	\label{fig:performance}
\end{figure}

\begin{figure}[t]
	\centering
	\begin{subfigure}{0.49\linewidth}
		\centering
		\includegraphics[width=1.1\linewidth]{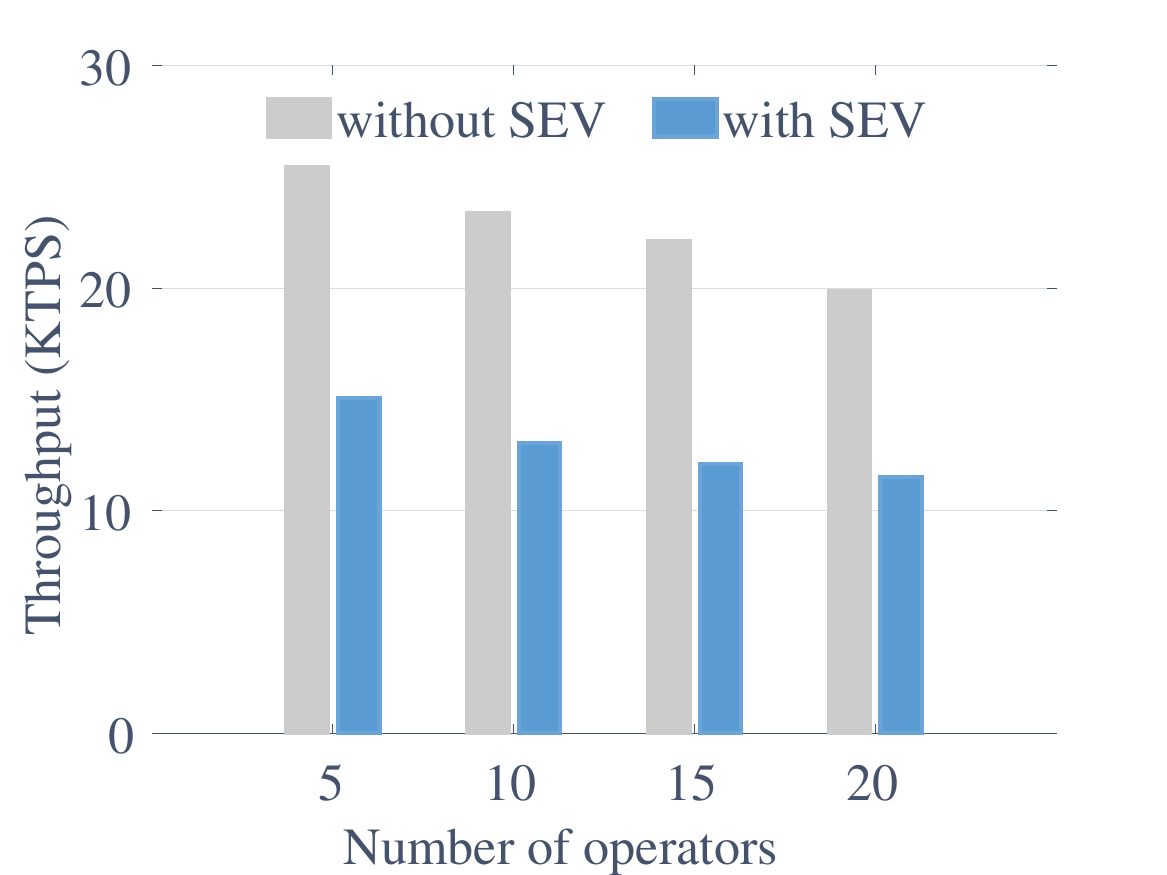}
		\caption{Throughput in LAN}
		\label{fig:throughputnodeLAN}
	\end{subfigure}
	\hfill
	\begin{subfigure}{0.49\linewidth}
		\centering
		\includegraphics[width=1.1\linewidth]{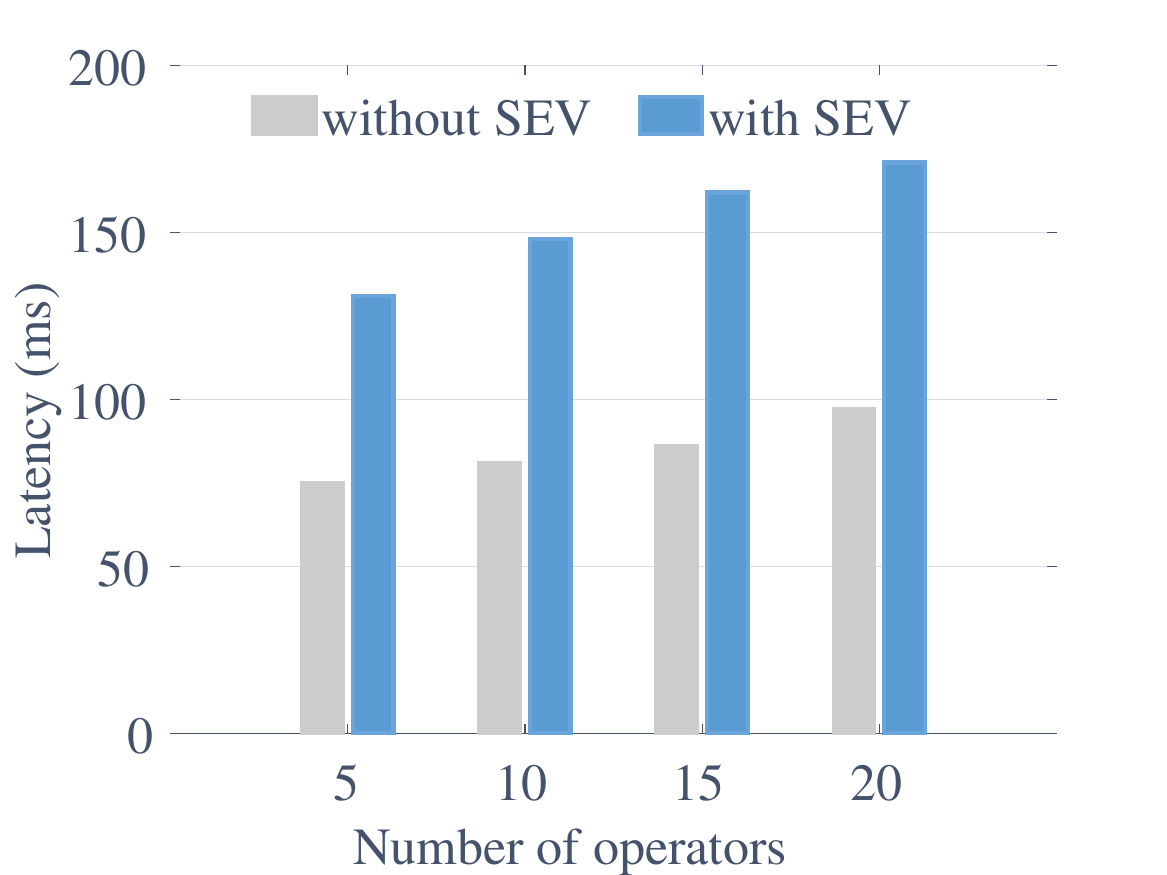}
		\caption{Latency in LAN}
		\label{fig:lantencynodeLAN}
	\end{subfigure}
	\hfill
	\begin{subfigure}{0.49\linewidth}
		\centering
		\includegraphics[width=1.1\linewidth]{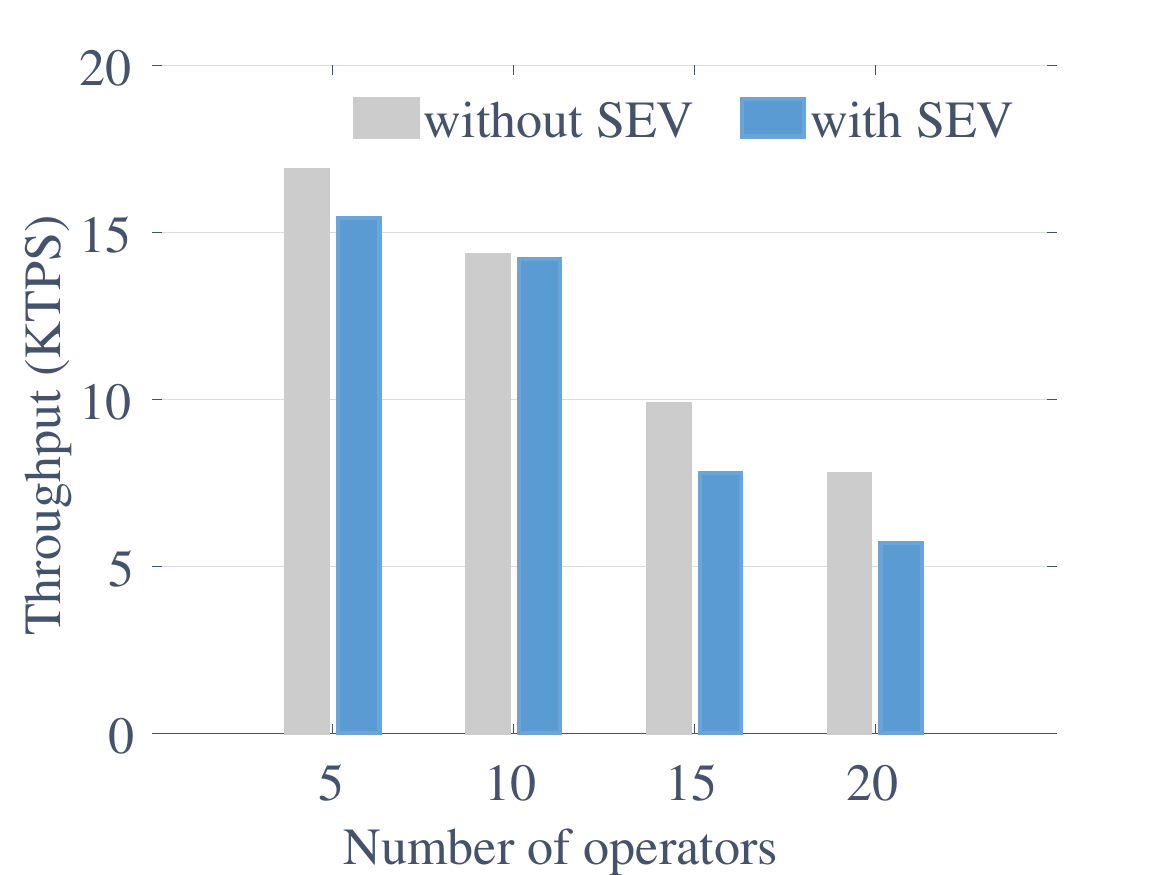}
		\caption{Throughput in WAN}
		\label{fig:throughputnodeWAN}
	\end{subfigure}
	\hfill
	\begin{subfigure}{0.49\linewidth}
		\centering
		\includegraphics[width=1.1\linewidth]{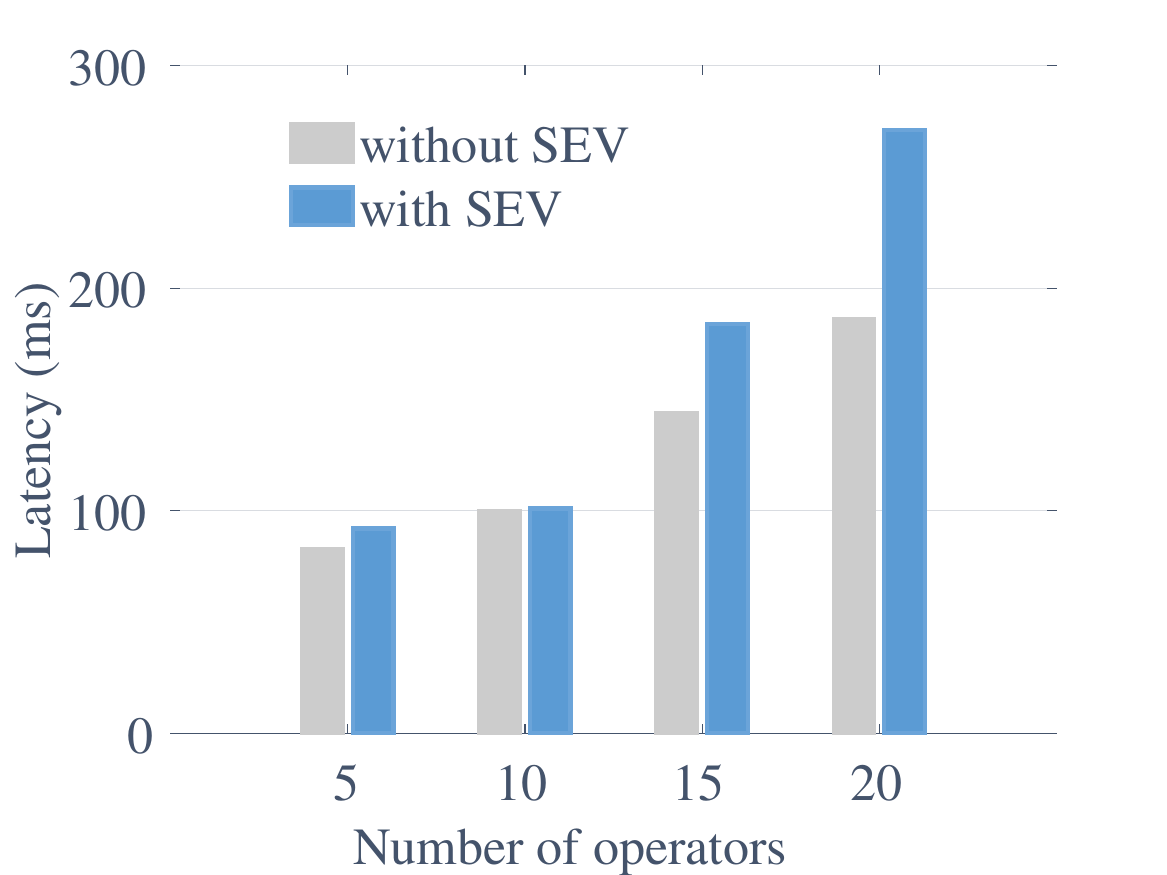}
		\caption{Latency in WAN}
		\label{fig:lantencynodeWAN}
	\end{subfigure}
	\caption{System performance with/without SEV in LAN and WAN.}
	\label{fig:sev}
\end{figure}

\subsection{Comparison with Counterparts}\label{sec:compare}

    \begin{table}[t]
    \footnotesize
    \centering
    \setlength{\abovecaptionskip}{0cm}
    \caption{\textbf{Comparison with state-of-the-art counterparts.}} \label{table:comparecost}
    \begin{tabular}{@{}lllll@{}}
        \toprule[1pt]
                               & XClaim   &  ZK-Bridge    & Tesseract & \sysname  \\
        \midrule
         Cost (gas/tx)    & $>$388882   & $>$227000 & 239181 & 124836     \\
          USD (\$)            & $>$22.17   & $>$12.9 & 13.63 & 7.11     \\
        \bottomrule[1pt]
    \end{tabular}
    \vspace{-0.4cm}
\end{table}

\bheading{Comparison with XClaim.}
A cross-chain exchange for XClaim~\cite{zamyatin2019xclaim} consists of the locking of the currencies on the backed chain (\ie, source blockchain), the issuing of cryptocurrency-backed assets (CBAs) on the issuing chain (\ie, target blockchain). 
As shown in Table~\ref{table:evaluation}, the cost for processing one cross-chain exchange in \sysname is about 125K gas.
 In contrast, the on-chain cost of XClaim for a cross-chain exchange is about 389K gas (Table~\ref{table:comparecost}), representing a 67.87\% increase compared to \sysname.
Besides, to synchronize the state of the backed chain to the issuing chain,  XClaim incurs a daily cost of 28380K gas (1401.97 USD) to relay blocks, which is a significant additional expense that cannot be overlooked.


\bheading{Comparison with ZK-Bridge .}
ZK-Bridge~\cite{xie2022zkbridge} utilizes zero-knowledge proofs that are generated off-chain and verified on-chain. 
In contrast, \sysname employs off-chain transaction processing and verifies multi-signatures on-chain. 
We specifically focus on two fundamental steps common to both protocols: the generation of proofs/multi-signatures, and the on-chain verification. 
ZK-Bridge's transaction proof generation takes approximately 18s, while, as depicted in~\figref{fig:performance}, \sysname's latency is less than 300ms, showcasing significant efficiency improvements.
The gas consumption for on-chain verification of multi-signatures in \sysname is illustrated in~\figref{fig:cost}, which is comparable with the 227K gas (Table~\ref{table:comparecost}) required for ZK-Bridge verification.


\bheading{Comparison with Tessercat.} {Tesseract~\cite{tesseract}, a TEE solution for cross-chain exchange, generates transaction pairs with the hash time lock. 
For a fair comparison, we deployed the HTLC on Ethereum to measure the on-chain cost~\cite{ethhtlc}.}
A cross-chain transaction with Tesseract involves two phases: lock and withdraw. 
The on-chain costs for the two phases are 165K gas and 74K gas, respectively, resulting in a total cost of 239K gas (13.63 USD). 
By contrast, \sysname's cost is 125K gas, which declines by 47.70\%. 

\section{Beyond \sysname}
In this paper, \sysname is built atop blockchains that support smart contracts. 
However, it can be easily extended to blockchains (\eg, Bitcoin~\cite{nakamoto2008bitcoin}) that do not support smart contracts. 
The main extension is to replace the challenge-response mechanism with hash time-lock to resolve redemption issues caused by TEEs' unavailability. We now introduce it in detail.

\bheading{Hash time-lock.}
A hash time-lock is a transactional agreement used in the cryptocurrency industry to facilitate conditional payments. It incorporates both a hashlock and a timelock.
The hash lock ensures that the transaction includes a hashed secret, and it can only be successfully executed by revealing this secret (i.e., the preimage of the hash).
The time lock sets a time limit $T$, specifying that the transaction can only be executed after this time has passed.

\bheading{Contract Design.}
We consider the case, in which a client who expects to exchange currency $s$ on blockchain $\mathcal{S}$ for currency $t$ on blockchain $\mathcal{T}$.
First, the client sends an exchange request to the operators, who then generate a pair of transactions $tx_S$ and $tx_T$, while the preimage is stored inside the TEE. 
The execution of the transaction $tx_S$ falls into two cases.
\begin{packeditemize}
    \item $tx_S^1$: If the time exceeds $T$, the client's money will be returned to the original address.
    \item $tx_S^2$: If a secret preimage is revealed, the client's money will be transferred to an address controlled by \sysname.
\end{packeditemize}

The client signs the transaction $tx_S$ and publishes it on the blockchain $\mathcal{S}$. 
If, within time T, the operator detects that transaction $tx_T$ has been successfully executed, it will reveal the hash preimage on blockchain $\mathcal{S}$, resulting in the successful execution of transaction $tx_S^2$.
Otherwise, after time T, transaction $tx_S^2$ will be successfully executed, and the client will receive a refund on blockchain $\mathcal{S}$.
As we can see, the hash time-lock ensures the atomicity of the cross-chain transaction when the operators are unavailable.

\section{Conclusion and Future Work}\label{sec:conclusion}
In this paper, we propose \sysname, a trust-minimized and 
efficient cross-chain exchange without online-client requirement. 
\sysname employs a group of operators protected by TEEs as the trust root.
The cost-efficient challenge-response mechanism safeguards the atomicity of cross-chain exchanges, even when all TEEs are unavailable.
Furthermore, \sysname employs a lightweight transaction verification mechanism and simultaneously incorporates multiple optimizations to reduce on-chain costs.
The prototype implementation of  \sysname delivers promising results, surpassing existing state-of-the-art approaches in terms of on-chain cost and off-chain processing time. 
There are several future directions to improve \sysname. First, 
\sysname can be extended to support Proof-of-Work (PoW)-based blockchains like Bitcoin that do not have smart contracts. 
Second, more financial functions can be built on top of \sysname, \eg, lending and borrowing.



\normalem
\bibliographystyle{IEEEtran}
\bibliography{main}

\end{document}